\documentclass[journal]{IEEEtran}

\usepackage{graphicx}
\usepackage[colorlinks,bookmarksopen,bookmarksnumbered,citecolor=blue,urlcolor=red]{hyperref}
\usepackage{amssymb}
\usepackage{amsbsy}
\usepackage{ulem}
\usepackage{upgreek}
\usepackage{algorithm}
\usepackage{algorithmicx}
\usepackage{algpseudocode}
\usepackage{threeparttable}
\usepackage[cmex10]{amsmath}
\usepackage{array}
\usepackage[tight,footnotesize]{subfigure}
\usepackage{caption}
\usepackage{url}
\usepackage{epstopdf}
\usepackage{verbatim}
\usepackage{color}
\usepackage{float}
\usepackage{wrapfig}
\usepackage{xspace}

\newcommand*{\ie}{i.e.\@\xspace}
\usepackage{enumerate}

\newcommand*{\etc}{
    \@ifnextchar{.}
        {etc}
        {etc.\@\xspace}
}
\graphicspath{{figures/}}

\hyphenation{op-tical net-works semi-conduc-tor}

\begin{document}

\newtheorem{thm}{Theorem}
\newtheorem{lma}{Lemma}
\newtheorem{defi}{Definition}
\newtheorem{proper}{Property}

\title{Optimizing Resource Allocation and VNF Embedding in RAN Slicing}

\author{Tu N. Nguyen, Kashyab J. Ambarani, and My T. Thai

\IEEEcompsocitemizethanks{\IEEEcompsocthanksitem The short version of the paper has been accepted to the 2021 IEEE Global Communications Conference. Corresponding author: Tu N. Nguyen}

\IEEEcompsocitemizethanks{\IEEEcompsocthanksitem Tu N. Nguyen is with the Department of Computer Science, Kennesaw State University, Marietta, GA 30060 USA (e-mail: tu.nguyen@kennesaw.edu).}

\IEEEcompsocitemizethanks{\IEEEcompsocthanksitem K. J. Ambarani is with the Department of Computer Science, Purdue University Fort Wayne, Fort Wayne, IN 46805, USA (e-mail: ambakj01@pfw.edu).}

\IEEEcompsocitemizethanks{\IEEEcompsocthanksitem M. T. Thai is with the Computer and Information Science and Engineering, University of Florida, Gainesville, FL 32612 USA (e-mail: mythai@cise.ufl.edu).}

}

\IEEEcompsoctitleabstractindextext{%
\begin{abstract}
5G radio access network (RAN) with \textit{network slicing methodology} plays a key role in the development of the next-generation network system.
RAN slicing focuses on splitting the substrate's resources into a set of self-contained programmable \textit{RAN slices}.
Leveraged by network function virtualization (NFV),
a RAN slice is constituted by various virtual network functions (VNFs) and virtual links that are embedded as instances on substrate nodes.
In this work, we focus on the following fundamental tasks: i) establishing the theoretical foundation for constructing a VNF mapping plan for RAN slice recovery optimization and ii) developing algorithms needed to map/embed VNFs efficiently.
In particular, we propose four efficient algorithms, including Resource-based Algorithm (RBA), Connectivity-based Algorithm (CBA), Group-based Algorithm (GBA), and Group-Connectivity-based Algorithm (GCBA) to solve the resource allocation and VNF mapping problem.
Extensive experiments are also conducted to validate the robustness
of RAN slicing via the proposed algorithms.

\end{abstract}
\begin{IEEEkeywords}
RAN slicing, VNF embedding, resource allocation.
\end{IEEEkeywords}}

\maketitle
\IEEEdisplaynotcompsoctitleabstractindextext
\IEEEpeerreviewmaketitle

\section{Introduction} \label{sec:intro}

A new network technology is critical to leverage high demands for diverse vertical industry applications growing rapidly.
In recent years, Radio Access Network (RAN) slicing has become one of the most promising architectural technologies for the forthcoming 5G era \cite{10.1145/3241539.3241567,10.1145/3117811.3117831}.
RAN slicing completely overturns the traditional model of a single
ownership of all network resources and brings a new vision where the physical infrastructure resources are shared across many {\it RAN slices}. Each slice built on top of the underlying physical RAN (substrate) is a separate logical mobile network, which delivers a set of services with similar characteristics and is isolated from others \cite{7926923,7926922,8370043}.
Leveraged by network function virtualization (NFV), a RAN slice is constituted by various virtual network functions (VNFs) and {\it virtual links} that are embedded as instances on substrate nodes \cite{8004168,10.1145/3098822.3098826}.
RAN enforcement mechanisms enable a highly efficient resource management service and maximizes the resources configured \cite{5290389}. Efficient resource allocation and VNF embedding serve as one of the critical aspects in RAN slicing technologies \cite{5719526,6023087}. The resource allocation and VNF embedding problem is considered under a variety of constraints \cite{5290389,5705791}.

{\bf Motivation.} Due to the complexity of the RAN slicing, existing configuration schemes\footnote{Some other works refer to them as RAN enforcement.} in RAN slicing just focus on {\it resource allocation} \cite{8717789,7946928,6449268}. However, a RAN enforcement problem, in fact, cannot be addressed by only considering the resource allocation but also needs to consider the relation between the substrate nodes.
The relations between VNFs decide the ways in which the VNFs will be mapped into the substrate nodes.
In addition, because of the connection between VNFs, then bandwidth requirement of the virtual links between VNFs \cite{7524565,7410276} needs to be considered as well.
None of existing work fully considers above factors nor do they attempt to consider the interdependency property to mitigate the impact of the configuration process on the entire network.

\begin{figure*}[t]
\center
\subfigure{\includegraphics[width=15cm]{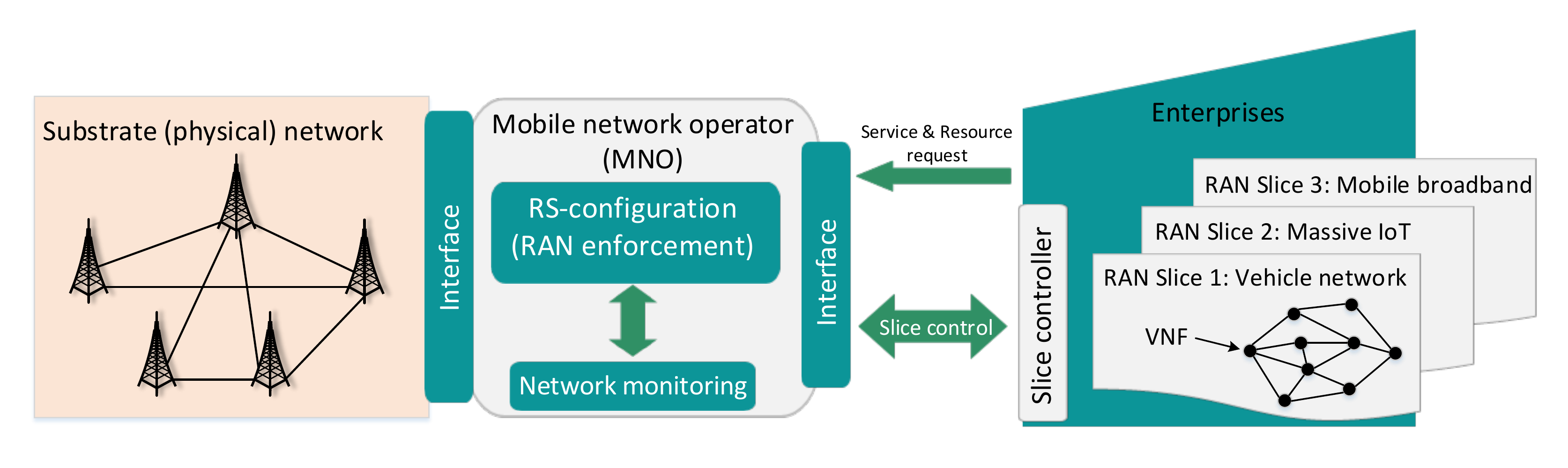}}
\caption{A high-level architecture for RAN slicing.}
\label{fig:ns}
\end{figure*}

\textbf{Contributions.}
In this work we propose a new RAN enforcement, namely {\it RAN slicing-configuration (RS-configuration)} that overcomes the drawbacks of the existing works to provide an optimal mapping plan for VNFs onto the substrate network. In particular, the RS-configuration will provide an ordered set of mappings of VNFs onto substrate nodes. The RS-configuration will consider not only availability of substrate's resources but also the interdependence between all {\it possible} mappings.
Specifically, this paper focuses on the following aspects: i) we propose to develop the theoretical model for constructing a RS-configuration for VNF mapping; ii) designing algorithms needed for mapping VNFs efficiently; and iii) conducting extensive experiments to validate the performance of the proposed model and algorithms.

\textbf{Organization:} We organize the paper as follows. In $\S$\ref{sec:background}, we introduce the background of RAN Slicing.
In $\S$\ref{sec:problem}, the network model as well as the research problem are presented.
Details of the proposed resource allocation algorithms are shown in $\S$\ref{sec:algorithm}. In $\S$\ref{sec:evaluation}, we discuss the experiment results. Finally, we make key concluding
remarks in $\S$\ref{sec:conclusion}.

\section{RAN Slicing and Current Challenges} \label{sec:background}

In order to focus on discussing the involved algorithmic problems, we present a generic architecture that is depicted in Fig. \ref{fig:ns}. The architecture in the figure shows the interactions of one (or more) {\it mobile network operator(s)} (MNOs) with multiple {\it enterprises}. The key aspect of RAN slicing is that the role of the MNO is to {\it coordinate} and allocate resources of the substrate network to ensure the harmonic coexistence of multiple RAN slices, while the role of the enterprise is to place slice requests and then manage the provided slices \cite{7926920,8004165,8057230,8004168}.
In particular, to prepare for a new slice initialization, the enterprise must first determine the required slice functionality and resources needed for VNFs of the requested slice.
It is envisioned that slice templates will be available for the most common types of services \cite{10.1145/2795381.2795390,7509393,7899415}. Thus, the enterprise may
select the slice template that fits its purpose and parameterize it according to its needs.
More specifically, a RAN slice that is independent from others consists of a set of VNFs. Upon receiving requests from the enterprise, the MNO works with the ``RS-configuration" (RAN
enforcement) to allocate  the substrate network resources (e.g., resource blocks in long-term evolution (LTE)) to VNFs and virtual links between VNFs, and provide an appropriate mapping plan, according to the proffered demand, RAN policy, connectivity, and available spectrum resources of the substrate nodes, also called {\it resource blocks} (RBs).

There have been efforts to design algorithms for the RAN resource allocation problem that tend to consider only the available resources of the substrate network to design a mapping plan and disregard the VNF connectivity and bandwidth requirements \cite{9071984,7529130,7417376,8377187}.
Again, a mapping plan for VNFs will not only hinge on substrate nodes resource allocation but also rely on the VNF connectivity requirement and bandwidth requirements of the VNF links \cite{7524565,7410276,8725795,7490359} (see $\S$\ref{sec:basic} for the details).
In order to be able to support novel services in a large-scale system, a more flexible and comprehensive RAN configuration scheme is needed to embed VNFs in the substrate nodes.

Unfortunately, existing works \cite{8717789,7946928,6449268,doro2019slice} consider constructing mapping plans for VNFs by considering only available resources, and this may cause an {\it internal fragmentation} in the available resources of the substrate network. In other words, available resources are sufficient but not {\it eligible} for mapping more VNFs onto the substrate nodes due to not meeting the bandwidth and connectivity requirements \cite{6963800,8494813}.
This step requires considerations of the interdependency property in the RAN slicing and efficient algorithms for mapping VNFs onto the substrate nodes. The algorithmic considerations of this step are described in the following sections of this paper. Based on the solution to this problem, the MNO can embed more VNFs in the substrate network using SDN and NFV technology \cite{8004168,8004169}.

\section{Network Model and Research Problem} \label{sec:problem}

In the following sections, the basic mathematical notations are initially introduced and simple examples are used to present the key ideas behind the proposed
RS-configuration paradigm and illustrate its advantages over the conventional resource allocation-based enforcement algorithm in $\S$\ref{sec:basic}. We then provide a general problem formulation and highlight the major objectives of the proposed paradigm in $\S$\ref{sec:general-problem}.

\subsection{Network Model: Basic Notations and Illustration} \label{sec:basic}
Given a substrate network $G^S=(N^S,E^S)$, let $N^S$ and $E^S$ denote the set of substrate nodes and links, respectively.
Considering a node $s \in N^S$, the total available resources at node $s$ is defined as $\mathcal{R}_s$. Namely, any node $s$ in $N^S$ can allocate a maximum amount $\mathcal{R}_s$ of resources to VNFs. A VNF in a specific slice is not available and accessible by other network slices for the isolation purpose \cite{5Gwhitepaper,8901876,7073808}.

Let $\mathcal{G}^V$ be the set of RAN slices running over the substrate network $G^S$, where $\mathcal{G}^V = \{G^{V_1}, G^{V_2}, \ldots, G^{V_{\ell}} \}$.
For emphasis, we will denote a single RAN slice as $G^{V_i}$ ($1 \leq i \leq \ell$) and drop $i$ when the context is clear.
We define a RAN slice $G^{V_i}=(N^{V_i},E^{V_i}) \in \mathcal{G}^V$, where $N^{V_i}$ and $E^{V_i}$ are the set of VNFs and the set of virtual links between VNFs, respectively.

For any VNF $u \in N^V$, it needs an amount $\Re_{u}$ of available resources at a substrate node to be embedded. For any virtual link $(u,v) \in E^V$, it requires a bandwidth $b_{(u,v)}$ for data flows between VNF $u$ and VNF $v$.
Likewise, instead of making the entire substrate network $G^S$ available for routing traffic of all data flows, we consider a more general case: we are restricted to the substrate network $G^S$ with an {\it available} capacity constraint matrix $\mathcal{C}$ for all substrate links $(s,t) \in E^S$, which specifies the maximum bandwidth $\mathcal{C}_{(s,t)}$ that can be allocated to all virtual links mapped onto $(s,t)$. We introduce a binary mapping variable $\mathcal{M}^u_s$ to indicate the decision that the VNF $u$ is mapped onto the substrate node $s$ when $\mathcal{M}^u_s = 1$, and $\mathcal{M}^u_s = 0$ otherwise.

In the first place of setting up the network, all VNFs must be {\it mapped} ({\it embedded}) onto substrate nodes.
To ensure that all VNFs will be embedded, the value of the binary mapping variable can be obtained as follows:
$\sum\limits_{s\in N^{S}}{\mathcal{M}_{s}^{u}}=1$ for all $u\in \bigcup\limits_{i=1}^{\ell }{N^{{{V}_{i}}}}$.
However, due to the limited capacity at substrate nodes (available resources) and links (maximum bandwidth), a VNF $u$ can actually be mapped onto a substrate node $s$ if the available resources at $s$ are sufficient ($\Re_u \leq \mathcal{R}_s$) and meet bandwidth and connectivity conditions such that $b_{(u,v)} \leq \mathcal{C}_{(s,t)}$, $(s,t) \in E^S$ for all $s,t \in N^S$, $v \in \mathcal{N}_u$, and $\phi^v_t = 1$, where $\mathcal{N}_u$ is the set of VNF $u$'s neighbors.
In addition, we use a variable $\psi _{(s,t)}^{(u,v)}$ to represent a simultaneous mapping of two VNFs $u$ and $v$ onto two substrate nodes $s$ and $t$, respectively. In particular, $\mathcal{M}_{s}^{u}+\mathcal{M}_{t}^{v}-\psi _{(s,t)}^{(u,v)}\le 1$\footnote{Such relations help avoid the quadratic constraint.}.

Fig. \ref{fig:ran1} shows an example of VNFs embedded in substrate nodes of the RAN slicing. The substrate network consists of five base stations (BSs). Each BS provides four resource blocks (RBs) (two frequency units during two-time slots) to RAN slices. In the figure, there are also two RAN slices, and each consists of various VNFs. Both slices utilize the same substrate  cellular network resources. We consider the VNF’s resource requirement, as each needs a certain amount of resources at a substrate node to be embedded, where $R_{u_1} = 2$ RBs, $R_{u_1} = R_{p_2} = R_{p_3}$ and $R_{p_2} = 1$ RB (25\% resources of a substrate node). In other words, if a VNF requires 25\% of the spectrum resources, the substrate node should make 25\% of the RBs  to the VNF. In the figure, the solid link between any two VNFs in a slice indicates that they are neighbors (e.g., $(u_1,u_2), (p_1,p_2), (p_1,p_3)$ and $(p_2,p_3)$). Such a relation as above implies that for any  VNF $u\in \bigcup\limits_{i=1}^{\ell }{N_{}^{{{V}_{i}}}}$, the target substrate nodes that the  VNFs can be embedded in must be the substrate nodes OR one of the neighbors of the substrate nodes on which the VNFs neighbors are embedded. This connectivity constrain make the VNF $u_1$ cannot be embedded in the substrate node $s_1$ even though there are sufficient available resources.

As illustrated by the simple example above, existing works \cite{8454739,8254073,8254685} that focus on mapping VNFs by considering only the available resource of substrate nodes may fail to fully embed all VNFs. Moreover, they cannot leverage the maximum number of VNFs embedded, resulting in malfunction in RAN slices. Therefore, it is vital to come up with efficient algorithms to handle
this resources allocation and embedding processes.
The resources allocation algorithms will allow us to attain higher flexibility and embedding performance simultaneously, without needing further resources. To illustrate how resources allocation are handled in this case, consider again the simple example in Fig. \ref{fig:ran2} with a mapping plan to embed VNFs following a strict order. The result is that all VNFs are successfully embedded in the substrate network. We remark that the embedding algorithm should guarantee that all mapped VNFs meet the bandwidth and connectivity requirements.

\begin{figure}[t]
\center
\subfigure[]{\includegraphics[width=4.2cm]{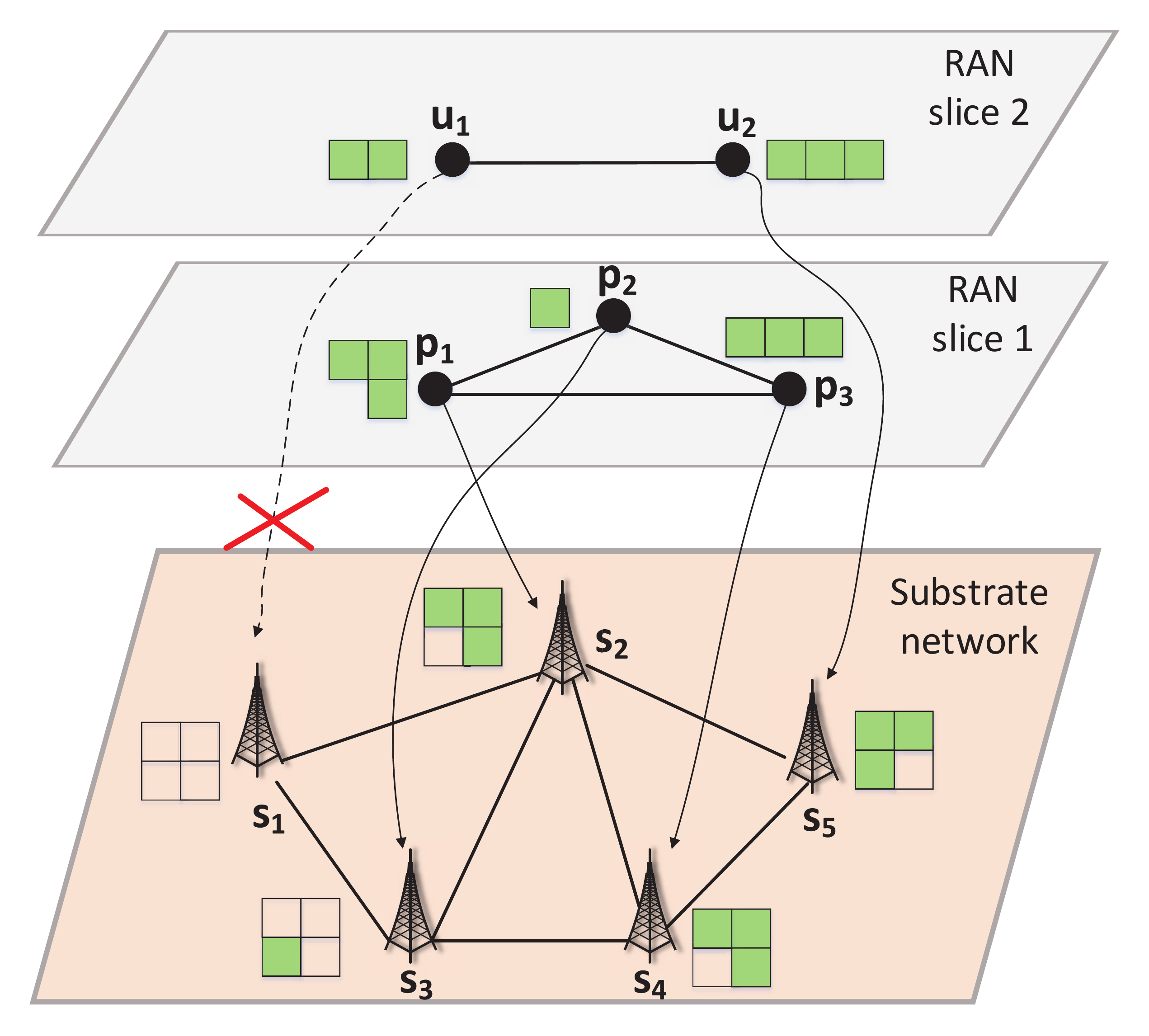} \label{fig:ran1}}
\subfigure[]{\includegraphics[width=4.2cm]{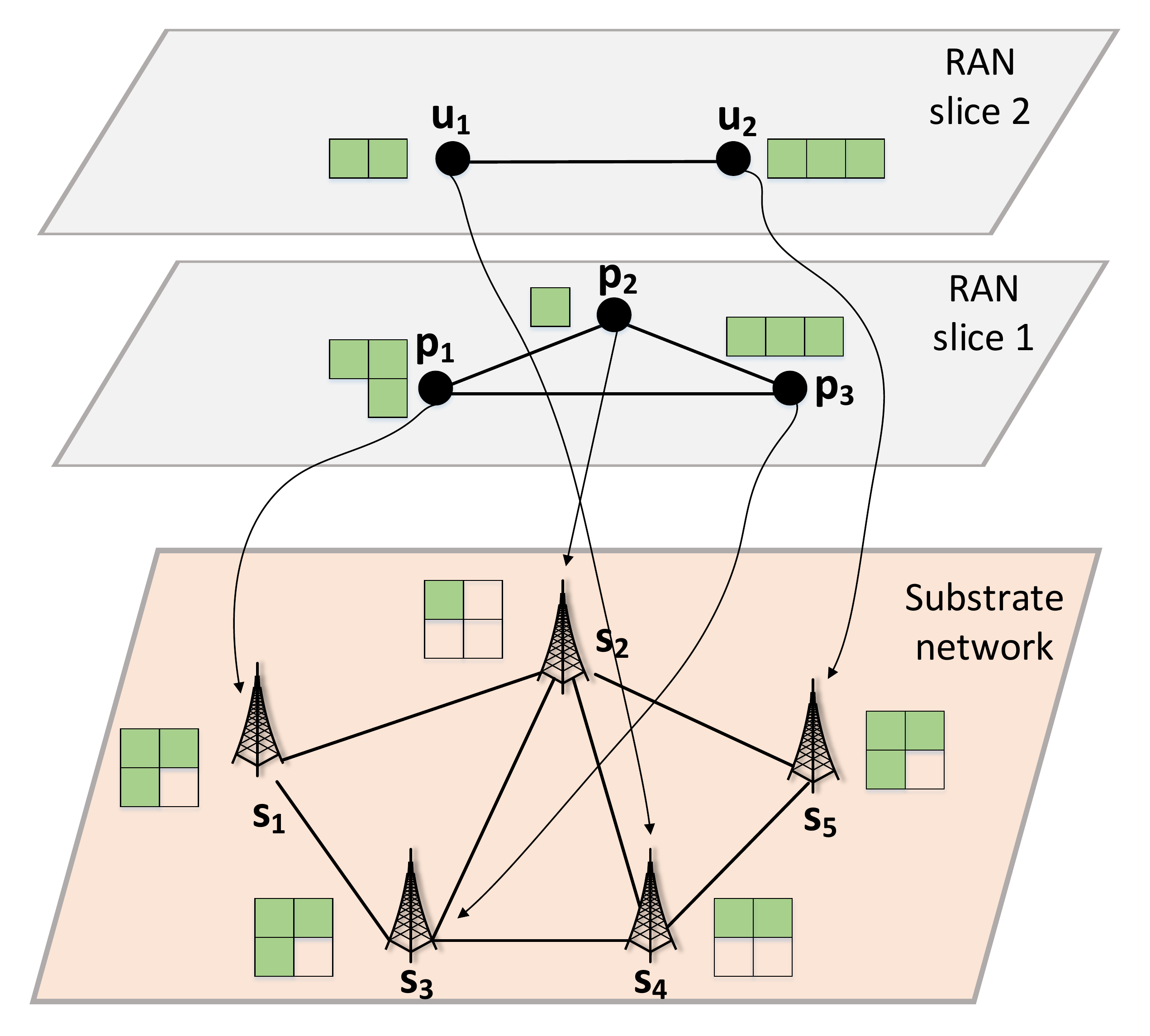} \label{fig:ran2}}
\caption{Resources allocation and embedding VNFs in a RAN slicing with a) a random mapping and b) RS-configuration algorithms (best viewed in color).}
\label{fig:mapping-slicing}
\vspace{-10pt}
\end{figure}

\subsection{RS-Configuration: Research Problem} \label{sec:general-problem}

Clearly embedding VNF strategy will not only hinge on substrate nodes resource allocation but also rely on the substrate connection set $E^S$ of the substrate network and the bandwidth requirement of the virtual links between VNFs. This is a key research challenge we will tackle in this paper.
In particular, for an RS-configuration in a cellular network, we would like to explicitly account for the {\it mapping plan} of VNFs, with the goal of providing a high performance for RAN slicing-based applications in terms of resources allocation for embedding VNFs in substrate node. A high performance in this stage is reflected through the number of successfully embedded VNFs.
This is in contrast to many existing studies where no recoverable and stable performance is assured under failures for RAN slicing-based applications, nor is any attempt made to mitigate the impact of the RAN configuration process on the entire network. We observe that, in general, the enhanced embedding performance is achieved at the expense of {\it increased configuration units} (more mappings) of the substrate network resources for all RAN slices.

\textbf{INSTANCE}: Suppose we are given a substrate network $G^S=(N^S,E^S)$ with a set of substrate nodes and links, respectively. Each substrate node $s \in N^S$ can allocate to VNFs an amount $\mathcal{R}_s$ of resources.
We consider a general case: we are restricted to the substrate network $G^S$ with an {\it available} capacity constraint matrix $\mathcal{C}$ for all $(s,t) \in E^S$, which specifies the maximum bandwidth $\mathcal{C}_{(s,t)}$ that can be allocated to all virtual links mapped onto $(s,t)$. Also, a set of RAN slices $\mathcal{G}^V = \{G^{V_1}, G^{V_2}, \ldots, G^{V_{\ell}} \}$ is running over the substrate network $G^S$. For any VNF, $u \in N^{V_i}$ requires an amount $\Re_{u}$ of available resources at a substrate node to be embedded. For any virtual link $(u,v) \in E^{V_i}$, it requires a bandwidth $b_{(u,v)}$ for data flows between two VNFs $u$ and $v$.

\textbf{QUESTION}: Does a mapping plan ($\mathcal{MP}$) for all VNFs exist in the RAN slicing, such that the number of successfully embedded VNFs is not less than $k$?
Mathematically, we formulate the following RS-configuration optimization problem:

\begin{equation}\label{eq:max-reccovery}
\underset{\mathcal{M}^u_s}{\mathop{\text{maximize}}}\,\sum\limits_{s\in N^{S}}{\sum\limits_{u\in \bigcup\limits_{i}^{\ell }{N^{{{V}_{i}}}}}{\mathcal{M}_{s}^{u}}}
\end{equation}

Here, the set of variables $\mathcal{MP} = \{\mathcal{M}^u_s\}$ represents one possible mapping/embedding plan solution for VNFs in the RAN slicing. The objective function is to maximize the total number of mappings of VNFs, which means that the number of VNFs embedded is maximized.
In addition, to ensure that one VNF can only be mapped onto at most one substrate node, they must meet the {\it mapping convergence constraint} at the first end $s$, $\sum\limits_{t}{\sum\limits_{v}{\psi _{(s,t)}^{(u,v)}}} = \mathcal{M}_{s}^{u}$, at the other end $t$, $\sum\limits_{s}{\sum\limits_{u}{\psi _{(s,t)}^{(u,v)}}} = \mathcal{M}_{t}^{v},$ and at both ends simultaneously, $\mathcal{M}_{s}^{u} + \mathcal{M}_{t}^{v}-\psi _{(s,t)}^{(u,v)}\le 1,$ for all $s,t\in N^{S}$, $u,v\in {{N}^{{{V}_{i}}}}$, and $\mathcal{M}_{s}^{u},$ $\mathcal{M}_{t}^{v},$ $\psi _{(s,t)}^{(u,v)} \in \{0,1\}$.

Apart from enhancing the embedding performance, we believe that latency is another key performance metric that must be accounted for in the design of the mapping plan for the RS-configuration. For example, delayed packets can cause the TCP timeout, resulting in unnecessary packet retransmissions, thus reducing overall application throughput.
We, therefore, desire to {\it bound the bandwidth variability}\footnote{Bounded bandwidth and bandwidth variability would help the end host at the network layer to reorder packets before passing to the transport layer.} in the mappings of VNFs as follows:
$\sum\limits_{u,v\in \mathcal{G}^V |\mathcal{M} _{s}^{u}=\mathcal{M} _{t}^{v}=1}{{{b}_{(u,v)}}}\le {{\mathcal{C}}_{(s,t)}}$ for all $(s,t) \in E^S$.
Namely, the total allocated bandwidth on a substrate link does not exceed its link capacity.
Likewise, the available resources at any substrate node $s$ must be sufficient, that is, $\sum\limits_{u \in \mathcal{G}^V | \mathcal{M}_{s}^{u}=1}{\Re_u\le \mathcal{R}_s}$ for all $s \in N^S$.
On the other hand, to conserve the virtual connection between VNFs, the mapping must also meet the connectivity constraint in which for any mapping of a VNF $u$ onto a substrate node $s$ and the other VNF $v$ onto the other substrate node $t$, for all $(u,v) \in E^V,$ $s$ and $t$ must be physically connected in the substrate network, that is, $(s,t) \in E^S$.
Thus, the mapping plan achieves the goal of providing a good embedding performance for RAN slicing.

\subsection{The Hardness of the RS-configuration Problem} \label{sec:difficulty}
In this section, we show the hardness of the RS-configuration problem by first showing that a sub-problem of the RS-configuration, referred to as the Reduced RS-configuration (RRS-configuration) problem, is NP-hard. The RRS-configuration is the RS-configuration with $\ell = 1$ that is there is only one network slice in the network, and $|N^S| = 1$ that is there is only one substrate node in the network. The problem is illustrated as follows:

\textbf{INSTANCE}: Given a substrate node $s$ that can allocate to VNFs an amount $\mathcal{R}_s$ of resources. Also, a RAN slice $\{G^{V_i}$ is running over the substrate node $s$. For any VNF, $u \in N^{V_i}$ requires an amount $\Re_{u}$ of available resources at a substrate node to be embedded.

\textbf{QUESTION}: Does a mapping plan ($\mathcal{MP}$) for all VNFs exist in the RAN slicing, such that the number of successfully embedded VNFs is not less than $k$?

Lemma \ref{lma:reduced-problem-hardness} is used to show that the
RRS-configuration problem is NP-hard. We first use the knapsack problem \cite{PISINGER20052271} to show the hardness of the RRS-configuration problem.
Then, the hardness of the RS-configuration problem is provided in Theorem \ref{thm:hardness}. The knapsack problem is illustrated
as follows:

\textbf{INSTANCE}: Given a set $n$ items; each item $i$ ($i = 1, \ldots, n$) has a value $c_i > 0$ and a weight $a_i > 0$, and a knapsack with capacity $b$.

\textbf{QUESTION}: Does the knapsack can be filled with items so as to maximize the total value of the items included in the knapsack?

\begin{lma}\label{lma:reduced-problem-hardness}
The RRS-configuration problem is NP-hard.
\end{lma}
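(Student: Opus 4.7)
The plan is to establish NP-hardness of the RRS-configuration problem through a polynomial-time reduction from the $0/1$ Knapsack decision problem (stated just before the lemma and known to be NP-hard). Given an arbitrary Knapsack instance with $n$ items (item $i$ having weight $a_i>0$ and value $c_i>0$), capacity $b$, and value threshold $V$, I would construct the corresponding RRS-configuration instance by fixing a single substrate node $s$ with capacity $\mathcal{R}_s = b$ and, for each item $i$, introducing a VNF $u_i$ with resource requirement $\Re_{u_i}=a_i$. This construction is obviously computable in time polynomial in the size of the Knapsack input.

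Next I would verify the feasibility correspondence. Since $|N^S|=1$ and no virtual links appear in the reduced instance, the connectivity and bandwidth constraints of the full problem collapse, and the only binding constraint on an embedded set $U'\subseteq\{u_1,\ldots,u_n\}$ is the node-capacity bound $\sum_{u_i\in U'}\Re_{u_i}\le\mathcal{R}_s$, which by construction is identical to the Knapsack capacity constraint $\sum_{i\in I'}a_i\le b$ on the corresponding index set $I'=\{i:u_i\in U'\}$. Hence feasible VNF embeddings and feasible Knapsack packings are in one-to-one correspondence.

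The main obstacle, and the step requiring the most delicate argument, is aligning the two objective functions, because Knapsack maximizes the value $\sum_{i\in I'}c_i$ whereas the RRS-configuration maximizes the VNF count $|U'|$. I would address this by refining the construction so that the RRS count tracks the Knapsack value: replicate each item $i$ into $c_i$ identical copies of the VNF $u_i$ sharing the scaled weight $a_i/c_i$, after uniformly multiplying every weight and the node capacity by a common integer denominator to restore integrality. Under this refinement, including a full group of $c_i$ copies consumes exactly the scaled $a_i$ units while contributing $c_i$ to the count, and setting the RRS threshold $k:=V$ yields the required equivalence: an RRS embedding of at least $k$ VNFs exists if and only if the Knapsack admits a packing of value at least $V$. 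Since the construction remains polynomial in the Knapsack input size, the NP-hardness of $0/1$ Knapsack transfers to the RRS-configuration problem, completing the proof.
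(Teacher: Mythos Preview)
Your reduction contains two genuine gaps that prevent it from establishing NP-hardness.

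First, the replication step is not polynomial-time. In the standard Knapsack problem the values $c_i$ are encoded in binary, so $\sum_i c_i$ can be exponential in the input length; constructing $c_i$ copies of each item therefore yields an RRS instance whose size is not polynomially bounded in the Knapsack input. Second, even if one grants a unary encoding, the ``if and only if'' you assert fails in the RRS-to-Knapsack direction, because nothing in the RRS-configuration problem forces an embedded set to consist of \emph{full} groups of $c_i$ copies. A concrete counterexample: take a single item with $a_1=3$, $c_1=3$, capacity $b=2$, and threshold $V=1$. The Knapsack answer is \textsc{no} (the unique item has weight $3>2$), yet your construction produces three VNFs each of requirement $a_1/c_1=1$ on a node of capacity $2$, so two VNFs can be embedded and the RRS answer for $k=1$ is \textsc{yes}. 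Thus a \textsc{yes} RRS instance need not correspond to a \textsc{yes} Knapsack instance, and the claimed equivalence breaks.

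For comparison, the paper does not attempt any replication or scaling. It maps each item $i$ directly to a single VNF with $\Re_u=c_i$, sets $\mathcal{R}_s=b$, and then restricts attention to Knapsack instances with unit weights $a_i=1$, arguing that this special case already sits inside RRS-configuration. Your route is therefore structurally different from the paper's: where the paper specializes the Knapsack side to force the two objectives to coincide, you instead try to reshape the RRS side by multiplying VNFs, and it is precisely this multiplication that introduces both the size blow-up and the partial-group loophole above.
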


\begin{proof}
In the knapsack problem, each item $i$ can be treated as a VNF
$u \in N^{V_i}$; the value $c_i$ can be considered as the value $\Re$ of each VNF; the knapsack with a capacity $b$ is treated as the substrate node $s$ with the capacity $\mathcal{R}_s$ in the network; and if we consider each item $i$ has a weight $a_i = 1$, the knapsack problem is also a RRS-configuration problem. We have that the knapsack problem is a sub-problem of the RRS-configuration problem. Because the knapsack problem is NP-hard \cite{PISINGER20052271}, the RRS-configuration is also NP-hard. This completes the proof.
\end{proof}

\begin{thm}\label{thm:hardness}
The RS-configuration problem is NP-complete.
\end{thm}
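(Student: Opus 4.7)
The plan is to prove both directions of NP-completeness, with Lemma \ref{lma:reduced-problem-hardness} doing the heavy lifting on the hardness side. First, I would verify membership in NP by exhibiting a polynomial-time checker: given a candidate mapping plan $\{\mathcal{M}^u_s\}$ together with the auxiliary co-location indicators $\{\psi^{(u,v)}_{(s,t)}\}$, one scans the assignment to confirm (i) each VNF satisfies $\sum_{s} \mathcal{M}^u_s \le 1$, (ii) each substrate node satisfies $\sum_{u : \mathcal{M}^u_s = 1} \Re_u \le \mathcal{R}_s$, (iii) each substrate link satisfies $\sum_{(u,v) : \mathcal{M}^u_s = \mathcal{M}^v_t = 1} b_{(u,v)} \le \mathcal{C}_{(s,t)}$, (iv) for every virtual edge $(u,v) \in E^V$ whose endpoints land on distinct substrate nodes $s \neq t$ we have $(s,t) \in E^S$, and (v) the aggregate $\sum_{s \in N^S} \sum_{u} \mathcal{M}^u_s \ge k$. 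Each check is a summation over at most $|N^S| \cdot |N^V|$ or $|E^S| \cdot |E^V|$ terms, so the whole verification is polynomial in the input size.

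For NP-hardness, I would lean directly on Lemma \ref{lma:reduced-problem-hardness} via a restriction argument. The RRS-configuration problem is literally the RS-configuration problem specialized to instances with $\ell = 1$, $|N^S| = 1$, and $E^S = \emptyset$; in this regime the link-bandwidth and cross-node connectivity constraints are vacuous, so the RS-configuration feasibility condition collapses exactly to the RRS knapsack-style constraint $\sum_{u : \mathcal{M}^u_s = 1} \Re_u \le \mathcal{R}_s$. The identity map is therefore a polynomial-time reduction from RRS-configuration to RS-configuration: any algorithm that decides RS-configuration in particular decides RRS-configuration on the same input, so a polynomial-time algorithm for the former would contradict the NP-hardness of the latter established in Lemma \ref{lma:reduced-problem-hardness}.

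Combining the two parts yields the theorem. I do not foresee a genuine obstacle; the only subtlety worth checking explicitly is that the identity embedding of RRS into RS preserves both feasibility and the objective value, and this is immediate because a single-node substrate admits no non-loop physical links, vacating every bandwidth and connectivity condition while leaving the counting objective $\sum_{s}\sum_{u} \mathcal{M}^u_s$ unchanged. The argument is thus a routine assembly of a polynomial-time verifier together with a restriction-based hardness transfer.
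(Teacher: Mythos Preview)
Your proposal is correct and follows essentially the same approach as the paper's own proof: NP membership plus a restriction-based hardness transfer from the RRS-configuration problem via Lemma~\ref{lma:reduced-problem-hardness}. The only difference is that you spell out the polynomial-time verifier and the vacuity of the link constraints explicitly, whereas the paper simply asserts that membership in NP is clear and that RRS is a sub-problem of RS.
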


\begin{proof}
By Lemma \ref{lma:reduced-problem-hardness}, because the RRS-configuration problem, which is a sub-problem of the RS-configuration problem, is NP-hard, the RS-configuration problem is, therefore, NP-hard. In addition, it is clear that the RS-configuration problem belongs to the NP class. We therefore have that the RS-configuration problem is NP-complete.
\end{proof}

\section{Resource Allocation Algorithms} \label{sec:algorithm}

Equation \ref{eq:max-reccovery} lays out the RS-configuration optimization problem thereby giving a better understanding of the design of the resource allocation algorithms which achieve the given optimization. Since the RS-configuration optimization problem is NP-complete (shown in section \ref{sec:difficulty}), we propose two heuristic frameworks to solve the investigated problem: 1) we focus on leveraging a good embedding efficiency while maintaining the lowest possible time complexity, and 2) we consider the embedding efficiency as the vital requirement, and then try to maximize the number of embedded VNFs regardless the time complexity.
In the following sections, the first two algorithms (\ie the Resource-based Algorithm (RBA) and Connectivity-based Algorithm (CBA)) follow the first heuristic class given above. On the other hand, the Group-Connectivity-based Algorithm (GCBA) and Group-based Algorithm (GBA) seek to maximize the embedding performance without being subject to a specific time constraint function, thereby falling into the category of the second class of heuristic approaches. As part of each algorithm subsection, an overview and pseudocode are provided, along with a detailed description of the key idea behind the designed algorithms for the orderings and metrics involved.

\subsection{Resource-based Algorithm (RBA)} \label{subsec:rba}

Given a substrate network $G^S=(N^S,E^S)$, the major challenge becomes {\it how to select available resources and allocate them to suitable VNFs to achieve the highest number of embedded VNFs}. In this section, we propose the Resource-based Algorithm (RBA), which addresses the RS-configuration optimization problem (equation \ref{eq:max-reccovery}) using a heuristic approach. The main idea is to generate a mapping plan $\mathcal{MP}$ which provides good embedding efficiency while achieving low computational complexity. RBA achieves the heuristic by taking a greedy approach and prioritizing the embedding of VNF $u$ based on resources (from highest to lowest) $\Re_u$ $\forall\ u \in N^{V_i}$, to the substrate node with the highest available resources $\mathcal{R}_t\ |\ \mathcal{R}_t \geq \Re_u\ \forall\ t \in N^S$. The key idea behind this approach is to iteratively select the most demanding VNF (in terms of resources) and find the substrate node which can host it.


\begin{algorithm} [http]
\caption{Resource-Based Algorithm}
\label{alg:rba}
\textbf{Input:} $G^S, \mathcal{G}^V$ \\
\textbf{Output:} $\mathcal{MP}$
\begin{algorithmic}[1]

\State Initializing an empty set of mappings: $\mathcal{MP} = \emptyset$.

\State Constructing a descending-order set $\mathcal{L}^S$ of substrate nodes based on the available resource value $\mathcal{R}_{t'} \,\forall\, t' \in N^S$.

\State Constructing a descending-order set $\mathcal{L}^V$ of VNFs on the basis of Resources $\Re_{u'}\ \forall\ u' \in N^{V_i}$.

\While {$\mathcal{L}^V \neq \emptyset$}

    \State Let $u$ be an element in $\mathcal{L}^V$ with highest value $\Re_u$

    \State \textbf{EmbeddingVNF($u$, $\mathcal{L}^S$, $\mathcal{MP}$)}

    \State $\mathcal{L}^V = \mathcal{L}^V - \{u\} $

\EndWhile

\State return $\mathcal{MP}$

\end{algorithmic}
\end{algorithm}

Given the substrate network $G^S$ and a set of RAN slices $\mathcal{G}^V$, we follow the RBA with three main steps for generating the mapping plan $\mathcal{MP}$ as follows:

\begin{enumerate}
    \item We first construct the prioritized set of VNFs $\mathcal{L}^V$ based on the required resources (from highest to lowest).
    \item We then construct the descending-order set of substrate nodes $\mathcal{L}^S$ based on available resources.
    \item We finally generate the mapping plan $\mathcal{MP}$ for embedding all VNFs in $\mathcal{L}^V$ to the substrate nodes in $\mathcal{L}^S$.
\end{enumerate}

As indicated in the algorithm \ref{alg:rba}, we first initialize the $\mathcal{MP}$ set, the prioritized set of VNFs $\mathcal{L}^V$, and construct the descending-order set of substrate nodes $\mathcal{L}^S$ in the first three lines of the pseudocode.
The procedure EmbeddingVNF handles step 3 \ie the embedding process, and is described in detail from lines 3--\!16. For a VNF $u$ and a substrate node $t$ such that $u \in \mathcal{L}^V$ and $t \in \mathcal{L}^S$, the embedding process is separated into two cases, 1) VNF $u$ has no neighbor which is currently embedded onto a substrate node (lines 3--\!7), and 2) VNF $u$ has at least one neighbor which is currently embedded onto a substrate node (lines 8--\!14).

The motivation behind the separation of the embedding process into different cases is due to the different embedding constraints imposed upon a VNF in both cases. In the case one, the VNF $u$ is only subject to the resource constraint $\mathcal{R}_t \geq \Re_u$. This is because there are no existing neighbors embedded onto the substrate network for the VNF $u$ to ensure connectivity and bandwidth constraints for. Upon checking for the satisfaction of the resource constraint in line 4, the embedding process is completed and the mapping variable $\mathcal{M}^u_t$ is added to the mapping plan set $\mathcal{MP}$ (lines 5--\!6). In case two \ie when $u$ has at least one neighbor embedded onto a substrate node, the connectivity and bandwidth constraints need to also be satisfied to embed a VNF onto a substrate node. Firstly, the connectivity is achieved by constructing the descending-order set $\mathcal{L}^{\mathcal{N}_u}_{S^+}$, containing the substrate nodes onto which the neighbors of $u$ ($\mathcal{N}_u$), and the common neighbors of nodes in $\mathcal{N}_u$ are hosted (line 9). The representation of the set $\mathcal{L}^{\mathcal{N}_u}_{S^+}$ is expressed by the following mathematical relation:

\begin{equation} \label{eq:embedsubsset}
    \mathcal{L}^{\mathcal{N}_u}_{S^+} = \{ s_i,\ldots, s_j \} \bigcup \{\mathcal{N}_{s_i} \bigcap \ldots \bigcap \mathcal{N}_{s_j}\}\ |\  \mathcal{M}^{u_p}_{s_q} = 1\
\end{equation}

where $\mathcal{R}_{s_i} \geq \mathcal{R}_{s_j}$ for all $u_p \in \mathcal{N}_{u}$ and $s_q \in \{ s_i, \ldots, s_j \}$. Once the set $\mathcal{L}^{\mathcal{N}_u}_{S^+}$ has been constructed, we select the first substrate node $t_i$ in $\mathcal{L}^{\mathcal{N}_u}_{S^+}$, which satisfies the resource and bandwidth constraint to host the VNF $u$ (line 10). If the substrate node $t_i$ exists, VNF $u$ is embedded onto $t_i$ and $\mathcal{M}^{u}_{t_i}$ is added to $\mathcal{MP}$ (lines 11--\!13). Hence, VNFs within RAN slices are embedded onto the substrate network and the mapping plan $\mathcal{MP}$ is determined.

\subsection{Connectivity-based Algorithm (CBA)} \label{subsec:cba}


\begin{algorithm} \caption{Connectivity-Based Algorithm}\label{alg:cba}
\textbf{Input:} $G^S, \mathcal{G}^V$ \\
\textbf{Output:} $\mathcal{MP}$
\begin{algorithmic}[1]

\State Initializing an empty set of mappings: $\mathcal{MP} = \emptyset$.

\State Constructing a descending-order set $\mathcal{L}^S$ of substrate nodes based on the available resource value $\mathcal{R}_{t'} \,\forall\, t' \in N^S$.

\State Constructing a descending-order set $\mathcal{L}^V$ of VNFs on the basis of degree $|\mathcal{N}_{u'}|\ \forall\ u' \in N^{V_i}$.

\While {$\mathcal{L}^V \neq \emptyset$}

    \State Let $u$ be an element in $\mathcal{L}^V$ with highest value $|\mathcal{N}_u|$

    \State \textbf{EmbeddingVNF($u$, $\mathcal{L}^S$, $\mathcal{MP}$)}

    \State $\mathcal{L}^V = \mathcal{L}^V - \{u\} $

\EndWhile

\State return $\mathcal{MP}$

\end{algorithmic}
\end{algorithm}

While the set $\mathcal{MP}$ is obtained by RBA, we consider a second algorithm, Connectivity-based Algorithm (CBA) that can be done with the same computational complexity as well as can provide a better embedding performance. The essential idea behind CBA is to consider the set of VNF $\mathcal{L}^V$ based on the degree value $|\mathcal{N}_v|$ of all VNF $v \in N^{V_i}$. This allows the VNF $v'$ with the highest degree value $|\mathcal{N}_{v'}|$ to be embedded onto the substrate node $s$ which has the highest available resources $\mathcal{R}_s$ $\forall s \in N^S$. The newly ordered set $\mathcal{L}^V$ would increase the possibility of VNFs in the set $\mathcal{N}_{v'}$ to be embedded onto the substrate node set $\mathcal{N}_{s^+}$, where $\mathcal{N}_{s^+}$ denotes the set of substrate node $s$ and its neighbors $\mathcal{N}_s$ such that $\mathcal{N}_{s^+} = s \bigcup \mathcal{N}_s$


\begin{algorithm} [t]
\begin{algorithmic}[1]\label{proc:embedvnf}
\Procedure{EmbeddingVNF} {$u$, $\mathcal{L}^S$ $\mathcal{MP}$}

\State Let $t$ be an element in $\mathcal{L}^S$ with highest value $\mathcal{R}_t$

\If {$\mathcal{M}^v_s = 0\ \forall\ v \in \mathcal{N}_u, s \in N^S$}

    \If {$\mathcal{R}_t \geq \Re_u $}
        \State $\mathcal{R}_t = \mathcal{R}_t - \Re_u$
        \State $\mathcal{MP} = \mathcal{MP} \bigcup \{\mathcal{M}^u_t\}$
    \EndIf

\Else

    \State Constructing $\mathcal{L}^{\mathcal{N}_u}_{S^+}$, representing the descending-order set of substrate nodes and their common neighbors, based on the available resource value, onto which the substrate nodes are hosting $v$ $\forall\ v \in \mathcal{N}_u$

    \State Let $t_i$ be the first element in $\mathcal{L}^{N_u}_{S^+}\ |\ b_{(u, u')}  \leq \mathcal{C}_{(t_i, t_j)}$ and $\Re_u \leq \mathcal{R}_{t_i}, \forall\ u' \in \mathcal{N}_u,\ t_j \in \mathcal{N}_{t^+_i}$, where $\mathcal{N}_{t^+_i}$ denotes the set of $t_i$'s neighbors and itself.

    \If {$\exists\ t_i$}
        \State $\mathcal{R}_{t_i} = \mathcal{R}_{t_i} - \Re_u$
        \State $\mathcal{MP} = \mathcal{MP} \bigcup \{\mathcal{M}^u_{t_i}\}$
    \EndIf

\EndIf
\State return $\mathcal{MP}$

\EndProcedure
\end{algorithmic}
\end{algorithm}

In algorithm \ref{alg:cba}, to obtain the mapping plan $\mathcal{MP}$ the process is still hinged on the three main steps we presented in the previous section.
However, we try to improve the performance of the matching of the VNFs and substrate nodes by using a different metric. Specifically, we construct the descending-order set $\mathcal{L}^V$ based on the degree value $|\mathcal{N}_v|$ of every VNF $v \in N^{V_i}$ rather than the resource value $\Re_v$ (line 3 in the pseudocode of the algorithm \ref{alg:cba}). The following steps are similar to the algorithm \ref{alg:rba}, the construction of the descending-order set of substrate nodes $\mathcal{L}^S$ is executed in line 2 while the procedure EmbeddingVNF determines the mapping plan set $\mathcal{MP}$.

\subsection{Group-Connectivity-based Algorithm (GCBA)} \label{subsec:gcba}

In the later section ($\S$\ref{sec:evaluation}) we test the performance of all proposed algorithms and it will be verified that the algorithm \ref{alg:rba} and algorithm \ref{alg:cba} provide a good embedding efficiency within a fair time complexity. However, the RS-configuration's objective function is not completely optimized due to the heuristic for the RBA and CBA being subject to the constraint of a low computational complexity.
In this section we propose a high performance algorithm, referred to as Group-Connectivity-based Algorithm (GCBA) to attain the highest number of embedded VNFs, without being subject to the computational complexity constraint. GCBA achieves a good performance by considering VNFs in a specific list of clusters, namely $\mathcal{L}^V$. We then find the best matching substrate group for a cluster set $\mathcal{N}_{u_i^+}$ (refer to $\S$\ref{subsec:cba}) $\forall\ \mathcal{N}_{u_i^+} \in \mathcal{L}^V$

\begin{algorithm} \caption{Group-Connectivity-Based Algorithm}\label{alg:gcba}
\textbf{Input:} $G^S, \mathcal{G}^V$ \\
\textbf{Output:} $\mathcal{MP}$
\begin{algorithmic}[1]

\State Initializing an empty set of mappings: $\mathcal{MP} = \emptyset$.

\State Constructing a descending-order (based on the size of each set) list of clusters $\mathcal{L}^V$, where $\mathcal{L}^V = \{\mathcal{N}_{u_i^+}, \ldots, \mathcal{N}_{u_j^+} \}$ such that $\mathcal{N}_{u_i^+} \bigcap \mathcal{N}_{u_j^+} = \emptyset$.

\While {$\mathcal{L}^V \neq \emptyset$}

    \State Let $\mathcal{N}_{u_i^+}$ be the first element in $\mathcal{L}^V$

    \For {$ u \in \mathcal{N}_{u_i^+}$}

        \State \textbf{EmbeddingGroup$(u, \mathcal{MP})$}

    \EndFor

    \State $\mathcal{L}^V = \mathcal{L}^V - \{\mathcal{N}_{u_i^+}\}$

\EndWhile

\State return $\mathcal{MP}$;

\end{algorithmic}
\end{algorithm}

To embed the VNFs in a given cluster set ($\mathcal{N}_{u_i^+}$) onto the appropriate substrate node, GCBA  performs the three step $\mathcal{MP}$ set generation process as follows:

\begin{enumerate}
    \item Firstly, we construct the descending-order list of VNF cluster sets $\mathcal{L}^V$ based on the size of each cluster set.
    \item We then construct the ordered-set of substrate nodes $\mathcal{L}^S$ per VNF $u$ such that $\mathcal{R}_t \geq \Re_u\ \forall\ t \in \mathcal{L}^S$ and $u \in \mathcal{L}^V$.
    \item Finally, we embed the mutually exclusive cluster sets in $\mathcal{L}^V$ to the substrate network.
\end{enumerate}

As specified in algorithm \ref{alg:gcba}, we construct the list of VNF cluster sets in line 2. This process is carried out by the iterative addition of the VNF cluster head $v$ followed by its respective neighbor set $\mathcal{N}_v$. The representation of the set $\mathcal{L}^V$ is expressed by the following mathematical relation:

\begin{equation}\label{eq:clusterset}
    \mathcal{L}^V = [ \{v_i \bigcup v'_i\}, \{v_j \bigcup v'_j\}, \ldots , \{v_k \bigcup v'_k\} ]\
\end{equation}

where $|\mathcal{N}_{v_i}| \geq |\mathcal{N}_{v_j}| \geq |\mathcal{N}_{v_k}|\ \forall\ v'_i \in \mathcal{N}_{v_i}, v'_j \in \mathcal{N}_{v_j},$ and $v'_k \in \mathcal{N}_{v+k}$
We construct the set of substrate nodes $\mathcal{L}^S$, containing the substrate nodes which satisfy the resource constraint $\mathcal{R}_t \geq \Re_u\ \forall\ t \in N^S$ and $u \in \mathcal{L}^V$.
To optimize the embedding performance, we define the \textit{neighborhood resource cumulative property} $x(T)$ for all VNFs and substrate nodes in order to improve the selection process of the best possible substrate node. Given a VNF $v$ and a substrate node $s\ \forall\ v \in \mathcal{L}^V$ and $ s \in \mathcal{L}^S$, the $x(T)$ property for $v$ and $s$ is computed as follows:

\begin{equation}\label{eq:neighbrescumul}
\begin{gathered}
    v(T_\Re) = \Re_v + \sum\limits_{v' \in \mathcal{N}_v}{\Re_{v'}},\\
    s(T_\mathcal{R}) = \mathcal{R}_s + \sum\limits_{s' \in \mathcal{N}_s}{\mathcal{R}_{s'}}
\end{gathered}
\end{equation}

In equation \ref{eq:neighbrescumul}, the replacement of $x$ in $x(T)$ with $v$ and $s$ along with addition of $\Re$ and $\mathcal{R}$ as subscripts to $T$, allows for greater clarity while determining which element the $x(T)$ property is being computed for. In procedure EmbeddingGroup, we evaluate the $x(T)$ property (lines 3--\!4) after the embedding process is separated into two cases. For each case, the substrate node is picked based on $x(T)$.

Specifically, we first compute the difference $s(T_\mathcal{R}) - v(T_\Re)$ (lines 6 and 11), after that the substrate node $s$, with the positive difference value closest to $0$ is selected. In the case of only negative differences, the substrate node with the smallest negative difference is selected for embedding. The idea behind this metric (computing the difference) is to find the substrate node $s$ with the highest probability of supporting the VNF $u$ as well as the neighbor set $\mathcal{N}_u$. We select the substrate node with the difference value closest to $0$ to minimize the possibility of the available resources being wasted. In addition, prioritizing the positive difference ensures maximum embedding (based on the resource constraint) of all the VNFs $u'$ in $\mathcal{N}_{u^+}$, for all $\mathcal{N}_{u^+} \in \mathcal{L}^V$. Finally, if $\mathcal{L}^V$ consists of substrate nodes producing solely negative differences, the substrate node with the smallest difference is chosen so as to increase the probability of the maximum number of VNFs being embedded. The three steps to obtain the optimal $\mathcal{MP}$ set are therefore completed, thereby determining $\mathcal{MP}$.

\subsection{Group-based Algorithm (GBA)} \label{subsec:gba}

Similar to the algorithm \ref{alg:cba}'s improvement in the embedding efficiency as compared to the algorithm \ref{alg:rba}, can we achieve a greater optimization over the objective function in the RS-configuration (equation \ref{eq:max-reccovery}) in comparision to the algorithm \ref{alg:gcba}? We propose the Group-based Algorithm (GBA) to further enhance the optimization. To achieve the enhanced optimization, we chance the basis of ordering the set of VNFs $\mathcal{L}^V$. Instead of ordering the VNF cluster sets based on the basis of the degree of cluster heads, descending-order cluster sets are constructed, based on the the neighborhood resource cumulative value ($x(T)$) of the cluster heads. In doing so, the cluster sets with the greatest isolation of resources are embedded onto the substrate layer first.

\begin{algorithm} [t]
\caption{Group-Based Algorithm}\label{alg:gba}
\textbf{Input:} $G^S, \mathcal{G}^V$ \\
\textbf{Output:} $\mathcal{MP}$
\begin{algorithmic}[1]

\State Initializing an empty set of mappings: $\mathcal{MP} = \emptyset$.

\State Evaluate the value $v(T_{\Re})$ for all VNFs $v$ in $N^{V_i}$

\State Constructing a descending-order (based on the value of $v(T_\Re)$) list of clusters $\mathcal{L}^V$, where $\mathcal{L}^V = \{\mathcal{N}_{u_i^+}, \ldots, \mathcal{N}_{u_j^+} \}$ such that $\mathcal{N}_{u_i^+} \bigcap \mathcal{N}_{u_j^+} = \emptyset$.

\While {$\mathcal{L}^V \neq \emptyset$}

    \State Let $\mathcal{N}_{u_i^+}$ be the first element in $\mathcal{L}^V$

    \For {$ u \in \mathcal{N}_{u_i^+}$}

        \State \textbf{EmbeddingGroup$(u, \mathcal{MP})$}

    \EndFor

    \State $\mathcal{L}^V = \mathcal{L}^V - \{\mathcal{N}_{u_i^+}\}$

\EndWhile

\State return $\mathcal{MP}$;

\end{algorithmic}
\end{algorithm}

As shown in the algorithm \ref{alg:gba}, the step one (line 2) shows the improvement of GBA comparing to GCBA. Steps two and three, similar to the algorithm \ref{alg:gcba}, are carried out with the help of the procedure EmbeddingGroup.
The following steps are strictly followed the three steps of the embedding process we discussed in the earlier section. Hence, the $\mathcal{MP}$ set is obtained.


\begin{algorithm}
\begin{algorithmic}[1] \label{proc:embedgroup}
\Procedure{EmbeddingGroup}{$u$, $\mathcal{MP}$}

\State Constructing a set $\mathcal{L}^S$ of substrate nodes such that
$\mathcal{R}_{t'} \geq \Re_u \,\forall\, t' \in N^S$.

\State Evaluate the value $s(T_\mathcal{R})$ for all substrate nodes $s$ in $\mathcal{L}^S$
\State Evaluate the value $u(T_{\Re})$

\If {$\mathcal{M}^{v'}_{t'} = 0\ \forall\ v' \in \mathcal{N}_u, t' \in N^S$}
    \State Let $t$ be a substrate node in $\mathcal{L}^S$ with the minimum difference $s(T_\mathcal{R}) - u(T_\Re)\ \forall\ s \in \mathcal{L}^S$
    \State $\mathcal{R}_t = \mathcal{R}_t - \Re_u$
    \State $\mathcal{MP} = \mathcal{MP} \bigcup \{\mathcal{M}^u_t\}$
\Else

    \State Constructing $\mathcal{L}^{\mathcal{N}_u}_{S^+}$, representing the set of substrate nodes in $\mathcal{L}^S$ and their common neighbors, onto which the substrate nodes are hosting $v'$ $\forall\ v' \in \mathcal{N}_u$

    \State Let $t_i$ be a substrate node in $\mathcal{L}^{\mathcal{N}_u}_{S^+}$ with the minimum difference $s''(T_\mathcal{R}) - v(T_\Re)$ and $b_{(u, v')}  \leq \mathcal{C}_{(t_i, t_j)}\ \forall\ s'' \in \mathcal{L}^{\mathcal{N}_u}_{S^+}, v \in \mathcal{N}_{u^+},  v' \in \mathcal{N}_u,\ t_j \in \mathcal{N}_{t^+_i}$, where $\mathcal{N}_{t^+_i}$ denotes the set of $t_i$'s neighbors and itself.

    \If {$\exists\ t_i$}
        \State $\mathcal{R}_{t_i} = \mathcal{R}_{t_i} - \Re_u$
        \State $\mathcal{MP} = \mathcal{MP} \bigcup \{\mathcal{M}^u_{t_i}\}$
    \EndIf

\EndIf

\State return $\mathcal{MP}$;

\EndProcedure
\end{algorithmic}
\end{algorithm}

\section{Time Complexity} \label{sec:time}

\begin{thm}\label{thm:procedure}
The time complexity of the procedure EmbeddingVNF is bounded in $O(|N^{V_i}| \times |N^S|) + O(|E^{V_i}| \times |E^S|)$
\end{thm}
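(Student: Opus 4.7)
The plan is to bound the running time of the procedure EmbeddingVNF by walking through the pseudocode line by line, identifying the dominant contribution from each step, and then summing the pieces into the two claimed terms. Since the input to the procedure is a single VNF $u$ together with the (already ordered) list $\mathcal{L}^S$ and the accumulator $\mathcal{MP}$, the cost should split naturally into (i) work that scans substrate nodes against VNFs and (ii) work that scans substrate links against virtual links.

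First I would treat the two branches of the \textbf{if}--\textbf{else} at line 3. Evaluating the condition $\mathcal{M}^v_s = 0$ for every $v \in \mathcal{N}_u$ and every $s \in N^S$ requires inspecting $|\mathcal{N}_u| \cdot |N^S|$ entries of the mapping table. Summing over all invocations of the procedure (or, more simply, bounding $|\mathcal{N}_u| \leq |N^{V_i}|$), this step contributes $O(|N^{V_i}| \times |N^S|)$. The body of the \textbf{if}-branch (lines 4--6) is then $O(1)$: one resource comparison and one set insertion. This gives the first term in the claimed bound.

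Second I would analyze the more expensive \textbf{else}-branch. Constructing $\mathcal{L}^{\mathcal{N}_u}_{S^+}$ at line 9 requires locating, for each already-mapped neighbor $v \in \mathcal{N}_u$, the substrate node hosting it and then intersecting the substrate-neighborhoods of those nodes; this is bounded by $O(|\mathcal{N}_u| \times |N^S|) \subseteq O(|N^{V_i}| \times |N^S|)$, which is absorbed into the first term. The selection of $t_i$ at line 10 is the main contributor to the second term: for each candidate $t_i \in \mathcal{L}^{\mathcal{N}_u}_{S^+}$ one must verify $b_{(u,u')} \leq \mathcal{C}_{(t_i,t_j)}$ for every virtual edge incident to $u$ and every substrate edge incident to $t_i$. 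A worst-case estimate is therefore $O(|E^{V_i}| \times |E^S|)$, producing the second term. The bookkeeping on lines 12--13 is again $O(1)$.

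The hard part will be arguing the bound for line 10 cleanly, because $\mathcal{L}^{\mathcal{N}_u}_{S^+}$ is a node set but the bandwidth check traverses edges; I would make this rigorous by fixing the order of quantification — for each virtual edge $(u,u') \in E^{V_i}$ incident to $u$, the bandwidth constraint is tested along at most $O(|E^S|)$ substrate edges $(t_i,t_j)$ — so that summing over incident virtual edges of $u$ yields at most $|E^{V_i}| \cdot |E^S|$ comparisons. Combining the two branches gives the stated bound $O(|N^{V_i}| \times |N^S|) + O(|E^{V_i}| \times |E^S|)$, which completes the proof.
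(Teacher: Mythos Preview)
Your line-by-line analysis mirrors the paper's approach exactly: both split the work into the branch test at line~3, the construction of $\mathcal{L}^{\mathcal{N}_u}_{S^+}$, the bandwidth check, and the residual resource check. The one substantive difference is the cost you assign to building $\mathcal{L}^{\mathcal{N}_u}_{S^+}$: you bound it by $O(|N^{V_i}|\times|N^S|)$, whereas the paper charges $O(|N^{V_i}|^{2}\times|N^S|)$ for this step (iterating over neighbors of $u$, over the links from $u$ to those neighbors, and over candidate substrate nodes). As a result the paper's own proof actually concludes with $O(|N^{V_i}|^{2}\times|N^S|)+O(|E^{V_i}|\times|E^S|)$, which is \emph{larger} than the bound stated in the theorem; your tighter accounting of the host-lookup and neighborhood-intersection step is the version that matches the theorem as written. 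You also omit the $O(|N^S|\log|N^S|)$ MergeSort cost for ordering $\mathcal{L}^{\mathcal{N}_u}_{S^+}$ that the paper mentions, but that term is dominated anyway and does not change the final bound.
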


\begin{proof}
In the procedure EmbeddingVNF, the maximum number of VNF $u$'s neighbors is $|N^{V_i}| - 1$, so that it requires $O(|N^{V_i}|)$ to determine if the VNF $u$ has a neighbor already embedded onto a substrate node. If the VNF $u$ has at least one already embedded neighbor, the maximum number of substrate nodes we have to check for satisfaction of the connectivity constraint is $|N^S|$. Therefore, obtaining the substrate nodes in the set $\mathcal{L}^{\mathcal{N}_u}_{S+}$ requires $O(|N^{V_i}|^2 \times |N^S|)$, since we consider each of the neighbors of the VNF $u$ ($|N^{V_i}| - 1$), the links between the VNF $u$ and the neighbors of $u$ ($|N^{V_i}| - 1$), and the substrate nodes which satisfy the connectivity constraint for the VNF $u$ and it's neighbors ($|N^{S}|$). To achieve the descending-order sorting of the set $\mathcal{L}^{\mathcal{N}_u}_{S^+}$, we utilize the \textit{MergeSort} mechanism that requires $O(|N^S| \times \log{|N^S|})$ (best-worst case) to construct the set $\mathcal{L}^{\mathcal{N}_u}_{S^+}$. Likewise, it requires $O(|E^{V_i}| \times |E^S|)$ for checking the satisfaction of the bandwidth constraint. Finally, for determining the substrate node which satisfies the resource requirement, $O(|N^S|)$ is required. Hence the time complexity of the procedure is given as $O(|N^{V_i}|^2 \times |N^S|) + O(|E^{V_i}| \times |E^S|)$.
\end{proof}

\begin{thm}\label{thm:rba}
The time complexity of the RBA is bounded in $O(|N^{V_i}| \times ((|N^{V_i}| \times |N^S|) + O(|E^{V_i}| \times |E^S|)))$
\end{thm}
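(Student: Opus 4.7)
The plan is to decompose the runtime of RBA (Algorithm \ref{alg:rba}) into three contributions: (i) the one-time construction of the descending-ordered lists $\mathcal{L}^S$ and $\mathcal{L}^V$ in lines 2--3, (ii) the cost of the \textbf{while} loop in lines 4--8 which executes exactly $|N^{V_i}|$ times since each iteration removes one element from $\mathcal{L}^V$, and (iii) the cost of each call to the procedure \textbf{EmbeddingVNF} inside the loop body. Contribution (iii) is the dominant term, and Theorem \ref{thm:procedure} already gives us an off-the-shelf bound of $O(|N^{V_i}| \times |N^S|) + O(|E^{V_i}| \times |E^S|)$ per call, so the heart of the argument is simply to multiply this by the loop count.

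First I would handle the sorting costs. Using \textit{MergeSort}, constructing $\mathcal{L}^S$ requires $O(|N^S| \log |N^S|)$ and constructing $\mathcal{L}^V$ requires $O(|N^{V_i}| \log |N^{V_i}|)$. Since both are polynomially smaller than $|N^{V_i}| \times |N^{V_i}| \times |N^S|$ for nontrivial instances, these terms are absorbed into the advertised bound and need not appear explicitly. Next, I would argue that selecting the highest-value element in $\mathcal{L}^V$ (line 5) and removing it (line 7) each take $O(1)$ amortized time once the list has been pre-sorted in line 3, so these operations contribute nothing beyond the procedure call itself.

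Then I would account for the loop body. Since $|\mathcal{L}^V| = |N^{V_i}|$ initially and exactly one VNF is removed per iteration, the loop runs $|N^{V_i}|$ times. Invoking Theorem \ref{thm:procedure}, each iteration costs $O(|N^{V_i}| \times |N^S|) + O(|E^{V_i}| \times |E^S|)$. Summing over the $|N^{V_i}|$ iterations yields
\begin{equation*}
O\bigl(|N^{V_i}| \times \bigl( (|N^{V_i}| \times |N^S|) + (|E^{V_i}| \times |E^S|) \bigr)\bigr),
\end{equation*}
which matches the stated bound.

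The only real obstacle is bookkeeping: making sure that the bound for \textbf{EmbeddingVNF} cited from Theorem \ref{thm:procedure} is genuinely a per-call worst-case bound (independent of how many VNFs have already been embedded), so that it is legitimate to simply multiply by $|N^{V_i}|$ rather than to sum a telescoping series. Since the per-call analysis in Theorem \ref{thm:procedure} uses only the static sizes $|N^{V_i}|$, $|N^S|$, $|E^{V_i}|$, $|E^S|$ and never depends on the current state of $\mathcal{MP}$ or residual capacities, this multiplication is justified and the theorem follows.
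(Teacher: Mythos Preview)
Your proposal is correct and follows essentially the same decomposition as the paper's own proof: sorting costs for $\mathcal{L}^S$ and $\mathcal{L}^V$ via \textit{MergeSort}, then $|N^{V_i}|$ iterations of the while loop each invoking \textbf{EmbeddingVNF} at the cost given by Theorem~\ref{thm:procedure}. Your additional remarks on the $O(1)$ cost of selection/removal from a pre-sorted list and on the state-independence of the per-call bound are more careful than the paper's presentation but do not change the argument.
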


\begin{proof}
In the RBA, the descending-order sorting of the set of substrate nodes $\mathcal{L}^S$ requires $O(|N^S| \times \log{|N^S|})$. Similarly, to construct the descending-order set of VNFs $\mathcal{L}^V$, requires $O(|N^{V_i}| \times \log{|N^{V_i}|})$. Finally, the while loop iterates over the VNFs in $\mathcal{L}^V$, and for each iteration, the procedure EmbeddingVNF is executed. Considering that the number of VNFs in $\mathcal{L}^V$ is $|N^{V_i}|$, the while loop requires $O(|N^{V_i}| \times ((|N^{V_i}| \times |N^S|) + O(|E^{V_i}| \times |E^S|)))$ to determine the mapping plan. Hence, the time complexity of the RBA is given as $O(|N^{V_i}| \times ((|N^{V_i}|^2 \times |N^S|) + (|E^{V_i}| \times |E^S|)))$.

\end{proof}

\begin{thm}\label{thm:cba}
The time complexity of the CBA is bounded in $O(|N^{V_i}| \times ((|N^{V_i}| \times |N^S|) + O(|E^{V_i}| \times |E^S|)))$.
\end{thm}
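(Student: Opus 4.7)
The plan is to mirror the accounting used for Theorem \ref{thm:rba}, since Algorithm \ref{alg:cba} (CBA) differs from Algorithm \ref{alg:rba} (RBA) in exactly one place: the sorting key used to build $\mathcal{L}^V$ is the VNF degree $|\mathcal{N}_{u'}|$ rather than the resource requirement $\Re_{u'}$. Every other line of pseudocode --- the initialization of $\mathcal{MP}$, the descending sort of substrate nodes into $\mathcal{L}^S$, the while-loop over $\mathcal{L}^V$, and the invocation of the procedure \textbf{EmbeddingVNF} --- is literally identical, so the cost can be tallied line-by-line in the same way and then the two keyed-sorting terms compared.

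First I would account for the two preliminary sortings. Constructing $\mathcal{L}^S$ by MergeSort on the available-resource values of the $|N^S|$ substrate nodes costs $O(|N^S| \log |N^S|)$, just as in RBA. Constructing $\mathcal{L}^V$ requires computing $|\mathcal{N}_{u'}|$ for every $u' \in N^{V_i}$, which is a single pass over the slice adjacency structure and so takes $O(|N^{V_i}| + |E^{V_i}|)$; sorting the resulting degrees then costs $O(|N^{V_i}| \log |N^{V_i}|)$. Both preliminary terms are dominated by the main loop, exactly as in the proof of Theorem \ref{thm:rba}.

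Next, the while-loop iterates $|N^{V_i}|$ times --- once for each VNF in $\mathcal{L}^V$ --- and at each iteration it invokes \textbf{EmbeddingVNF}. By Theorem \ref{thm:procedure}, each invocation costs $O(|N^{V_i}|^2 \times |N^S|) + O(|E^{V_i}| \times |E^S|)$. Summing across the $|N^{V_i}|$ iterations therefore produces $O(|N^{V_i}| \times ((|N^{V_i}|^2 \times |N^S|) + (|E^{V_i}| \times |E^S|)))$, which matches the stated bound in the same form as for RBA.

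The only real obstacle --- a very mild one --- is to certify that swapping the sorting key does not inflate the asymptotic cost. Degrees can be read off from the input RAN-slice graph $G^{V_i}$ in linear time, and the embedding procedure is oblivious to which criterion produced the ordering of $\mathcal{L}^V$, so Theorem \ref{thm:procedure} applies unchanged and the bound from Theorem \ref{thm:rba} transfers directly. No new analytical machinery is needed beyond citing these two earlier results.
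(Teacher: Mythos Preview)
Your proposal is correct and follows essentially the same approach as the paper: observe that CBA differs from RBA only in the sorting key for $\mathcal{L}^V$, note that the two preliminary sorts still cost $O(|N^S|\log|N^S|)$ and $O(|N^{V_i}|\log|N^{V_i}|)$, and then multiply the per-iteration cost of \textbf{EmbeddingVNF} from Theorem~\ref{thm:procedure} by the $|N^{V_i}|$ iterations of the while-loop. The only addition you make---explicitly noting the $O(|N^{V_i}|+|E^{V_i}|)$ cost to read off degrees---is a harmless refinement that the paper omits.
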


\begin{proof}
In comparison to the theorem \ref{thm:rba}, the distinct preliminary step within the CBA is the metric, based on which the set of VNFs $\mathcal{L}^V$ are sorted. Regardless of the different metric, $O(|N^{V_i}| \times \log{|N^{V_i}|})$ is still required to construct the set $\mathcal{L}^V$. The set of substrate nodes $\mathcal{L}^S$ also requires $O(|N^S| \times \log{|N^S|})$ to be constructed. Finally, the while loop (similar to the theorem \ref{thm:rba}) requires $O(|N^{V_i}| \times ((|N^{V_i}| \times |N^S|) + O(|E^{V_i}| \times |E^S|)))$. Therefore, the time complexity of the CBA is given as $O(|N^{V_i}| \times ((|N^{V_i}|^2 \times |N^S|) + (|E^{V_i}| \times |E^S|)))$.
\end{proof}

\begin{thm}\label{thm:proceduregroup}
The time complexity of the procedure EmbeddingGroup is bounded in $O(|N^{V_i}|^2 \times |N^S|) + O(|E^{V_i}| \times |E^S|)$.
\end{thm}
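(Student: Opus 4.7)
The plan is to mirror the template established in the proof of Theorem \ref{thm:procedure} for EmbeddingVNF, since EmbeddingGroup shares the same overall control flow (a case split on whether $u$ has an already-embedded neighbor, followed by the construction of a filtered substrate-node set and a bandwidth check). The only genuinely new ingredient is the neighborhood resource cumulative property $x(T)$ from equation \ref{eq:neighbrescumul}, so the obligation is to bound its cost and verify it is absorbed into the claimed $O(|N^{V_i}|^2 \times |N^S|) + O(|E^{V_i}| \times |E^S|)$.

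The approach is to bound each line of the procedure separately and then add, since the lines execute sequentially. Line 2 filters substrate nodes by the resource constraint $\mathcal{R}_{t'} \geq \Re_u$ in a single pass, costing $O(|N^S|)$. Line 3 evaluates $s(T_{\mathcal{R}})$ for every $s \in \mathcal{L}^S$; each such evaluation sums $\mathcal{R}_{s'}$ over substrate neighbors of $s$, so the total is at most $O(|N^S|^2)$. Line 4 computes $u(T_{\Re})$ in $O(|N^{V_i}|)$. Line 5 decides the case by checking, for each neighbor of $u$, whether any mapping variable is set, which as in Theorem \ref{thm:procedure} costs $O(|N^{V_i}|)$. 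In Case 1 (lines 6--8) we scan $\mathcal{L}^S$ for the minimizer of $s(T_{\mathcal{R}}) - u(T_{\Re})$ and update $\mathcal{MP}$, costing $O(|N^S|)$. In Case 2 (lines 10--14) we construct $\mathcal{L}^{\mathcal{N}_u}_{S^+}$ by reusing the counting argument from Theorem \ref{thm:procedure} (iterate over the at most $|N^{V_i}|-1$ neighbors of $u$, the virtual links incident to $u$, and the candidate substrate nodes), yielding $O(|N^{V_i}|^2 \times |N^S|)$; the bandwidth verification $b_{(u,v')} \leq \mathcal{C}_{(t_i,t_j)}$ over all relevant link pairs contributes $O(|E^{V_i}| \times |E^S|)$; and selecting the minimizer of $s''(T_{\mathcal{R}}) - v(T_{\Re})$ inside $\mathcal{L}^{\mathcal{N}_u}_{S^+}$ is linear in that set, hence $O(|N^S|)$.

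Summing these contributions gives a bound dominated by $O(|N^{V_i}|^2 \times |N^S|) + O(|E^{V_i}| \times |E^S|)$, provided I check that the stray $O(|N^S|^2)$ term from Line 3 and the $O(|N^S|)$ terms are absorbed. Absorption is immediate whenever $|N^S| \leq |N^{V_i}|^2$, which is the regime of interest; in the opposite regime a symmetric argument still places every term below the claimed asymptote. This matches the form reported in the statement.

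The main obstacle is not any individual computation but the bookkeeping across the two cases, specifically (i) making sure the new $x(T)$ evaluations in lines 3--4 do not inflate the asymptotic order beyond what was established for EmbeddingVNF, and (ii) being precise about what "common neighbors" means in the construction of $\mathcal{L}^{\mathcal{N}_u}_{S^+}$ so that the counting argument of Theorem \ref{thm:procedure} applies verbatim. A secondary subtlety is justifying the use of an unsorted minimization over $\mathcal{L}^{\mathcal{N}_u}_{S^+}$ rather than the MergeSort-based ordering used in EmbeddingVNF; because only a single minimum is needed here, a linear scan suffices and does not change the bound.
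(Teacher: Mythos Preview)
Your proposal is correct and follows essentially the same line-by-line accounting as the paper's own proof: both bound Line~2 by $O(|N^S|)$, the $s(T_{\mathcal{R}})$ evaluations by $O(|N^S|^2)$, the $u(T_{\Re})$ evaluation and the embedded-neighbor check by $O(|N^{V_i}|)$, the construction of $\mathcal{L}^{\mathcal{N}_u}_{S^+}$ by $O(|N^{V_i}|^2 \times |N^S|)$ via the argument of Theorem~\ref{thm:procedure}, and the bandwidth check plus minimizer selection by $O(|E^{V_i}| \times |E^S|) + O(|N^S|)$. Your explicit remark about absorbing the $O(|N^S|^2)$ term is a detail the paper leaves implicit, but otherwise the arguments coincide.
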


\begin{proof}
In the procedure EmbeddingGroup, we first construct the set of substrate nodes $\mathcal{L}^S$ which satisfy the resource constraint, requiring $O(|N^S|)$. To evaluate the $s(T_{\mathcal{R}})$ property, we require to traverse through a maximum of $|N^S|$ substrate nodes, and for each substrate node the cumulative of $|N^S|$  substrate nodes (at maximum) requires to be computed. Therefore, we require $O(|N^S|^2)$ to compute the $s(T_{\mathcal{R}})$ property for each substrate node in $\mathcal{L}^S$. Likewise, to evaluate the $u(T_{\Re})$ property for the VNF $u$, we require $O(N^{V_i})$. As shown in theorem \ref{thm:procedure}, we require $O(|N^{V_i}|$ to determine whether the VNF $u$ has a neighbor already embedded. Computing the differences of the $x(T)$ property's of the substrate nodes in $\mathcal{L}^S$ and the VNF $u$ along with finding the substrate node $t$ with the minimum difference would require $O(|N^S|)$. To construct the set $\mathcal{L}^{\mathcal{N}_u}_{S+}$, $O(|N^{V_i}|^2 * |N^S|)$ would be required (similar to \ref{thm:procedure}). Lastly, we consider the substrate node $t_i$ in $\mathcal{L}^{\mathcal{N}_u}_{S+}$, with the minimum difference of the $x(T)$ property's of the substrate nodes (in $\mathcal{L}^{\mathcal{N}_u}_{S+}$)  and the VNF $u$, and satisfies the bandwidth constraint. To obtain the substrate node $t_i$, we require $O(|E^{V_i}| \times |E^S|) + O(|N^S|)$. Hence the time complexity of the procedure is given as $O(|N^{V_i}|^2 \times |N^S|) + O(|E^{V_i}| \times |E^S|)$.
\end{proof}

\begin{thm}\label{thm:gcba}
The time complexity of the GCBA is bounded in $O(|N^{V_i}|^2 \times ((|N^{V_i}|^2 \times |N^S|) + (|E^{V_i}| \times |E^S|)))$.
\end{thm}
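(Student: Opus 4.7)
The plan is to decompose the GCBA into its preprocessing phase, its main loop structure, and the per-iteration cost, and then combine them using the bound from Theorem \ref{thm:proceduregroup}. First I would account for the construction of the descending-order list of mutually exclusive clusters $\mathcal{L}^V$. Each cluster $\mathcal{N}_{u_i^+}$ is built from a cluster head together with its neighbor set, so constructing the full collection requires traversing each VNF in $N^{V_i}$ and examining its neighbors (at most $|N^{V_i}|-1$ each), followed by a MergeSort by size. This gives $O(|N^{V_i}|^2) + O(|N^{V_i}|\log|N^{V_i}|)$, which will be absorbed by the dominant term established below.

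Next I would bound the nested loops. The outer while loop processes one cluster per iteration from $\mathcal{L}^V$, and since the clusters are disjoint there are at most $|N^{V_i}|$ of them, so the while loop contributes a factor of $|N^{V_i}|$. For each chosen cluster, the inner for loop iterates over the VNFs contained in it, which is at most $|N^{V_i}|$ in the worst case. Using the product bound on the two loops gives at most $|N^{V_i}|^2$ total invocations of \textbf{EmbeddingGroup}.

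Then I would plug in the per-call cost. By Theorem \ref{thm:proceduregroup}, each invocation of \textbf{EmbeddingGroup} is bounded by $O(|N^{V_i}|^2 \times |N^S|) + O(|E^{V_i}| \times |E^S|)$. Multiplying the loop bound by the per-call bound yields
\[
O\!\left(|N^{V_i}|^2 \times \bigl((|N^{V_i}|^2 \times |N^S|) + (|E^{V_i}| \times |E^S|)\bigr)\right),
\]
which matches the statement. The preprocessing cost is strictly dominated by this expression, so the overall GCBA complexity is as claimed.

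The main obstacle, if any, is in justifying the $|N^{V_i}|^2$ outer factor rather than a tighter $|N^{V_i}|$: since the clusters are mutually exclusive, a summation over the for loops across all while iterations actually covers each VNF at most once, giving only $|N^{V_i}|$ total calls. I would therefore be explicit that we are using the conservative product of the two loop bounds (each $O(|N^{V_i}|)$) rather than the tighter combined bound, so the stated upper bound is a valid but non-tight worst-case estimate. Apart from this bookkeeping point, the remainder of the argument is just substitution of Theorem \ref{thm:proceduregroup} into the loop skeleton of Algorithm \ref{alg:gcba}.
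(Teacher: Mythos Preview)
Your proposal is correct and mirrors the paper's own argument: the paper likewise bounds the construction of $\mathcal{L}^V$ by $O(|N^{V_i}|\log|N^{V_i}|)+O(|N^{V_i}|^2)$, then multiplies the $|N^{V_i}|^2$ nested-loop bound by the per-call cost from Theorem~\ref{thm:proceduregroup}. Your extra remark that the disjoint-cluster structure would actually permit a tighter $|N^{V_i}|$ total call count is a valid sharpening the paper does not mention, but it does not change the correctness of the stated upper bound.
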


\begin{proof}
In the GCBA, to construct the set of VNF cluster sets $\mathcal{L}^V$, requires $O(|N^{V_i}| \times \log{|N^{V_i}|}) + O(|N^{V_i}|^2)$. The while loop iterates over the VNF cluster sets  in $\mathcal{L}^V$, and each iteration executes the for loop which iterates through the cluster sets $N_{u_i^+}$. The procedure EmbeddingGroup is called within each iteration of the for loop, thereby resulting in the requirement of $O(|N^{V_i}|^2 \times ((|N^{V_i}|^2 \times |N^S|) + (|E^{V_i}| \times |E^S|)))$. Therefore, the time complexity of the GCBA is given as $O(|N^{V_i}|^2 \times ((|N^{V_i}|^2 \times |N^S|) + (|E^{V_i}| \times |E^S|)))$.
\end{proof}

\begin{thm}\label{thm:gba}
The time complexity of the GBA is bounded in $O(|N^{V_i}|^2 \times ((|N^{V_i}|^2 \times |N^S|) + (|E^{V_i}| \times |E^S|)))$.
\end{thm}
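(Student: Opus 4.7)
The plan is to mirror the argument used for Theorem \ref{thm:gcba}, pointing out that GBA differs from GCBA only in the preliminary ordering step (line 2 of Algorithm \ref{alg:gba}), which evaluates the neighborhood resource cumulative property $v(T_\Re)$ for every VNF in $N^{V_i}$ before the cluster list $\mathcal{L}^V$ is assembled. I would therefore isolate this new step, bound it, and then observe that the dominating cost is still the while loop of Algorithm \ref{alg:gba} which invokes the procedure EmbeddingGroup analyzed in Theorem \ref{thm:proceduregroup}.

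First I would handle the preliminary computation. Evaluating $v(T_\Re) = \Re_v + \sum_{v' \in \mathcal{N}_v} \Re_{v'}$ for a single VNF $v$ requires summing over at most $|N^{V_i}| - 1$ neighbors, so computing it for all VNFs in $N^{V_i}$ costs $O(|N^{V_i}|^2)$. Next, sorting the clusters of $\mathcal{L}^V$ in descending order of $v(T_\Re)$ using MergeSort costs $O(|N^{V_i}| \log |N^{V_i}|)$, and partitioning $N^{V_i}$ into the mutually disjoint cluster sets $\mathcal{N}_{u_i^+}$ costs $O(|N^{V_i}|^2)$ as in the GCBA analysis. All of these preliminary costs are absorbed by the bound to be proved.

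Second I would analyze the main loop. The outer while loop iterates over clusters of $\mathcal{L}^V$, and the inner for loop iterates over the VNFs within each cluster. Because the clusters partition $N^{V_i}$, the combined body of the two loops executes the procedure EmbeddingGroup at most $|N^{V_i}|$ times. Each invocation of EmbeddingGroup costs $O(|N^{V_i}|^2 \times |N^S|) + O(|E^{V_i}| \times |E^S|)$ by Theorem \ref{thm:proceduregroup}. Multiplying gives a loop cost of $O(|N^{V_i}| \times ((|N^{V_i}|^2 \times |N^S|) + (|E^{V_i}| \times |E^S|)))$, which is then absorbed into the stated bound of $O(|N^{V_i}|^2 \times ((|N^{V_i}|^2 \times |N^S|) + (|E^{V_i}| \times |E^S|)))$; this accounts for the slack the authors keep consistent with the GCBA bound. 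Summing the preliminary cost and the loop cost yields the claimed complexity.

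The only mildly delicate point, which I would flag but not labor over, is that the bound stated is actually slightly loose: a tighter analysis only needs a factor $|N^{V_i}|$ rather than $|N^{V_i}|^2$ in front, since the EmbeddingGroup procedure is called once per VNF rather than once per (cluster, VNF) pair. I will adopt the authors' convention of keeping the $|N^{V_i}|^2$ factor so that GBA and GCBA share the same worst-case expression, which is the natural way to present the result in the context of this section.
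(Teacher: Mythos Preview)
Your proposal is correct and follows essentially the same route as the paper: compare GBA to GCBA, observe that only the sorting metric changes so the preprocessing cost $O(|N^{V_i}|\log|N^{V_i}|)+O(|N^{V_i}|^2)$ is unchanged, and then reuse the nested-loop bound $O(|N^{V_i}|^2 \times ((|N^{V_i}|^2 \times |N^S|) + (|E^{V_i}| \times |E^S|)))$ from Theorem~\ref{thm:gcba}. Your remark that the stated bound is loose (a single $|N^{V_i}|$ factor would suffice because the clusters partition $N^{V_i}$) is a valid refinement that the paper does not make explicit; keeping the $|N^{V_i}|^2$ factor for parity with GCBA is exactly what the paper does.
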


\begin{proof}
In comparison to the theorem \ref{thm:gcba}, GBA performs the sorting of the VNF cluster sets based on the $x(T)$ property of the cluster heads. Even with the change in the metric of sorting, the requirement $O(|N^{V_i}| \times \log{|N^{V_i}|}) + O(|N^{V_i}|^2)$ remains. The nested loop structure in the GBA requires $O(|N^{V_i}|^2 \times ((|N^{V_i}|^2 \times |N^S|) + (|E^{V_i}| \times |E^S|)))$. Therefore, the time complexity of the GBA is given as $O(|N^{V_i}|^2 \times ((|N^{V_i}|^2 \times |N^S|) + (|E^{V_i}| \times |E^S|)))$.
\end{proof}

\section{Performance Evaluation} \label{sec:evaluation}
In this section, we present simulations to demonstrate our approaches' efficiency by firstly implementing the proposed algorithms for constructing a embedding plan for RS-configuration through simulations. It is expected that the advantage of the algorithms will be manifested when the number of successfully embedded VNFs is significant high.

We consider the RAN slicing consisting of one substrate network and a variety of network slices. The number of network slices and the number of VNFs on each slice are randomly generated depending on the test case (we will present the details of setting in the following sections).
The substrate network is composed of a random number of substrate nodes representing the number of data centers. Each is initialized with an amount of resources.
It is feasible to create different embedding scenarios in the simulations and validate the simulation results. The first case is to test the feasibility of the proposed algorithms under the normal network condition that can provide reasonable resources for embedding VNFs into the substrate network, referred to as the \textbf{normal case}. In this case, we generate the number of substrate nodes from the interval $[60, 100]$, each is initialized with an amount of resources selected from the interval $[4, 8]$.
In terms of the network slice, the number of network slices is randomly generated from the interval $[2, 10]$ and the number of VNFs on each slice is also randomly chosen based on the number of network slices from the interval $[10, 100]$.

In the second case, we generate the network topology with limited resources that can test the performance of the proposed algorithms under the resource shortage condition, referred to as the \textbf{shortage case}. In this case, network resources are strictly controlled as follows: for the number of substrate nodes, from the interval $[60, 100]$; for the number of resources for each substrate node, from the interval $[2, 4]$; for the number of network slices, from the interval $[2, 10]$; and for the number of VNFs on each slice, from the interval $[1, 10]$.
In the following sections, Fig. \ref{fig:a-embedding} and Fig. \ref{fig:a-resource} show the results in the ``normal case" test, and Fig. \ref{fig:r-embedding} and Fig. \ref{fig:r-resource} show the results in the ``shortage case" test.

\subsection{Embedding Performance}

\begin{figure}[http]
\center
\subfigure[]{\includegraphics[width=4.2cm]{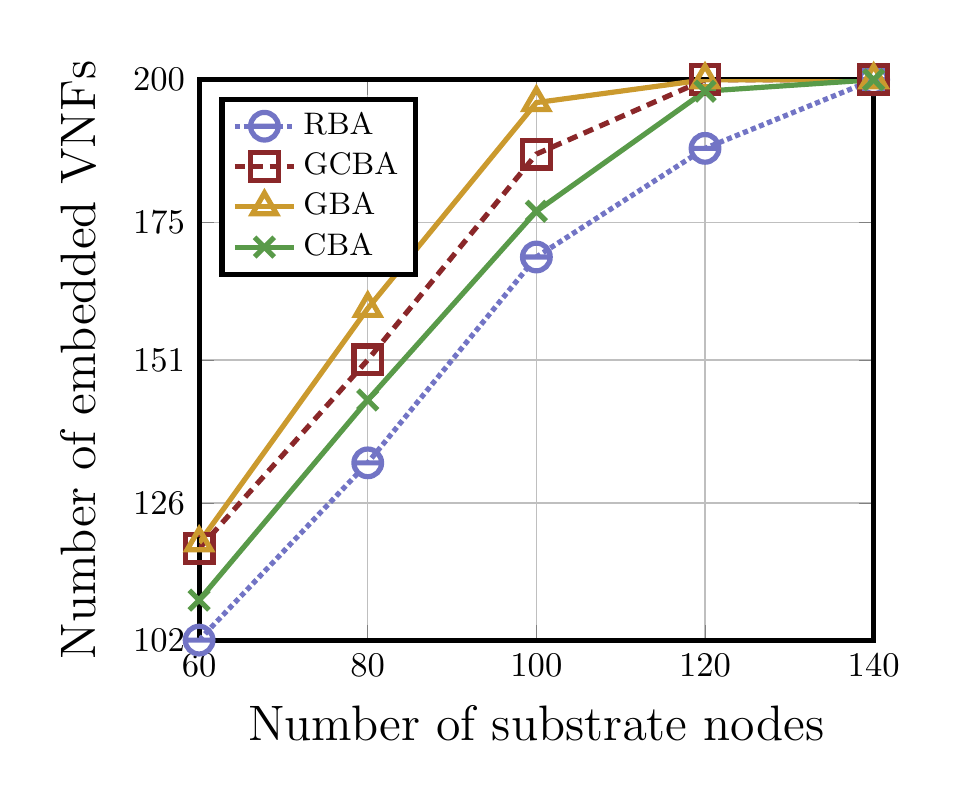}\label{fig:a-subnode-numsuc}}
\subfigure[]{\includegraphics[width=4.2cm]{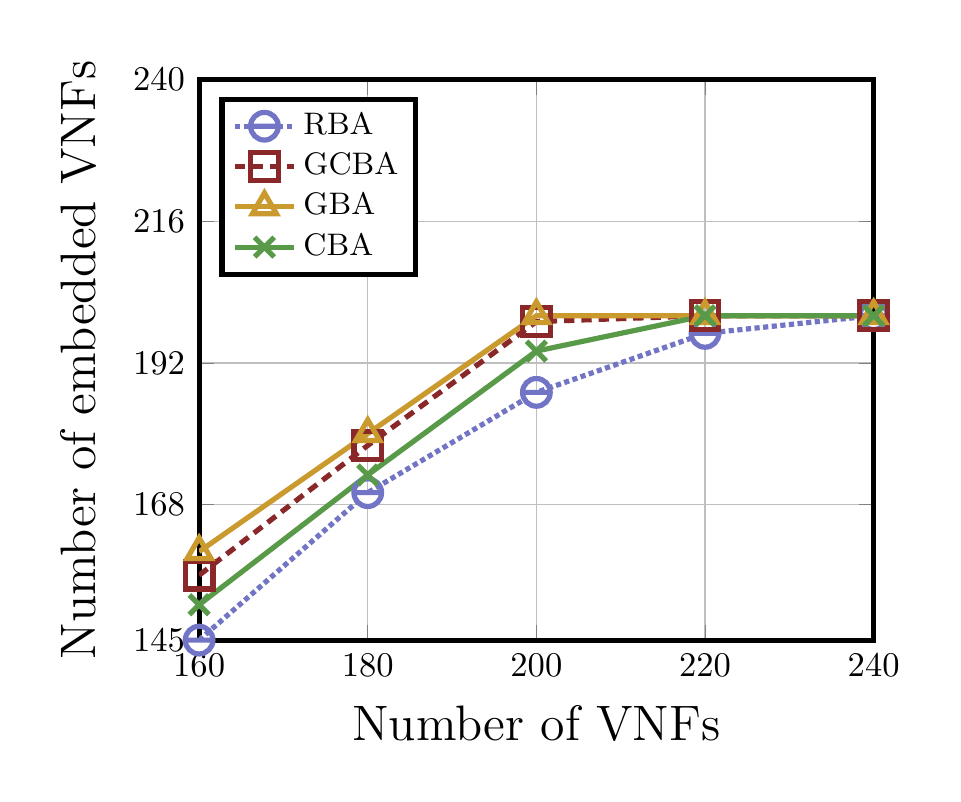} \label{fig:a-vnf-numsuc}}
\subfigure[]{\includegraphics[width=4.2cm]{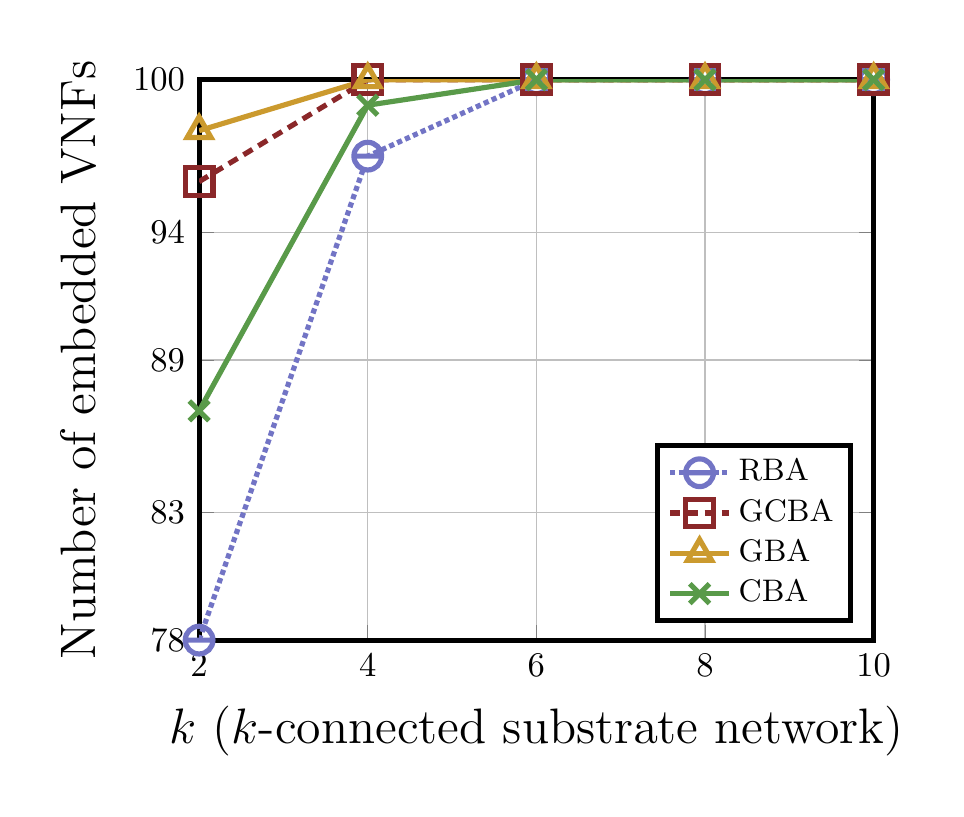} \label{fig:a-subdeg-numsuc}}
\subfigure[]{\includegraphics[width=4.2cm]{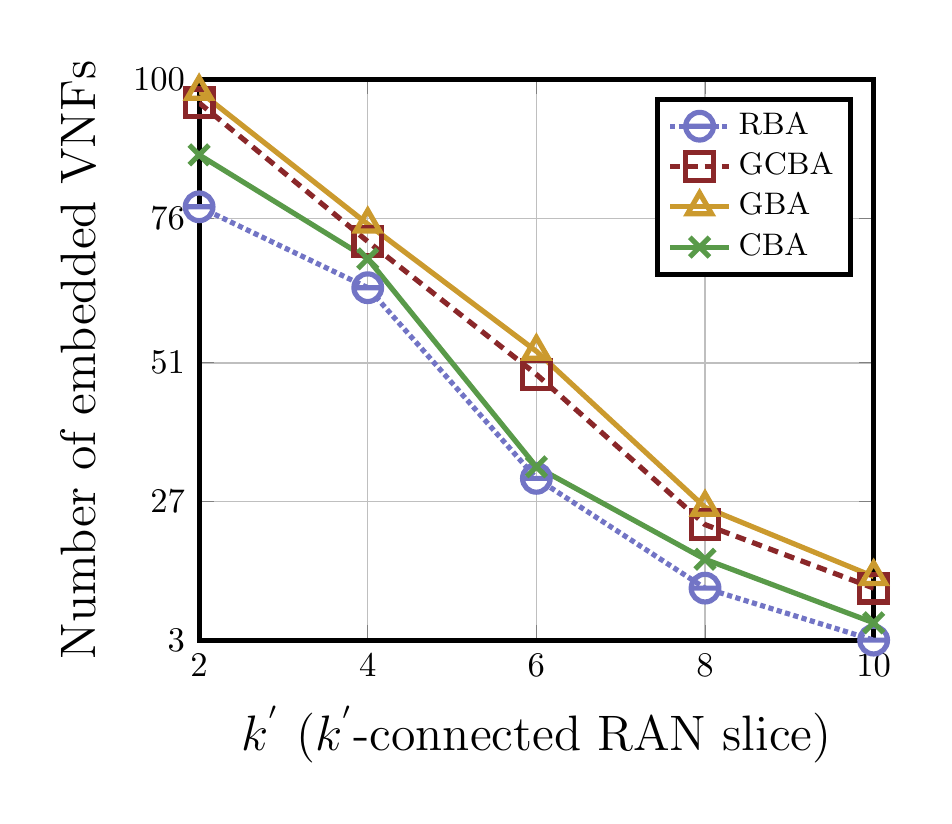}
\label{fig:a-degvnf-numsucc}}
\caption{\textbf{Normal Case:} total number of embedded VNFs when a) the number of substrate nodes ranging from 60 to 140, b) number of VNFs ranging from 160 to 240, c) $k$-connected substrate network with degree ($k$) ranging from 2 to 10 per substrate node, and d) $k'$-connected RAN slice with degree ($k'$) ranging from 2 to 10 per VNF.}
\label{fig:a-embedding}
\end{figure}

\begin{figure}[http]
\center
\subfigure[]{\includegraphics[width=4.2cm]{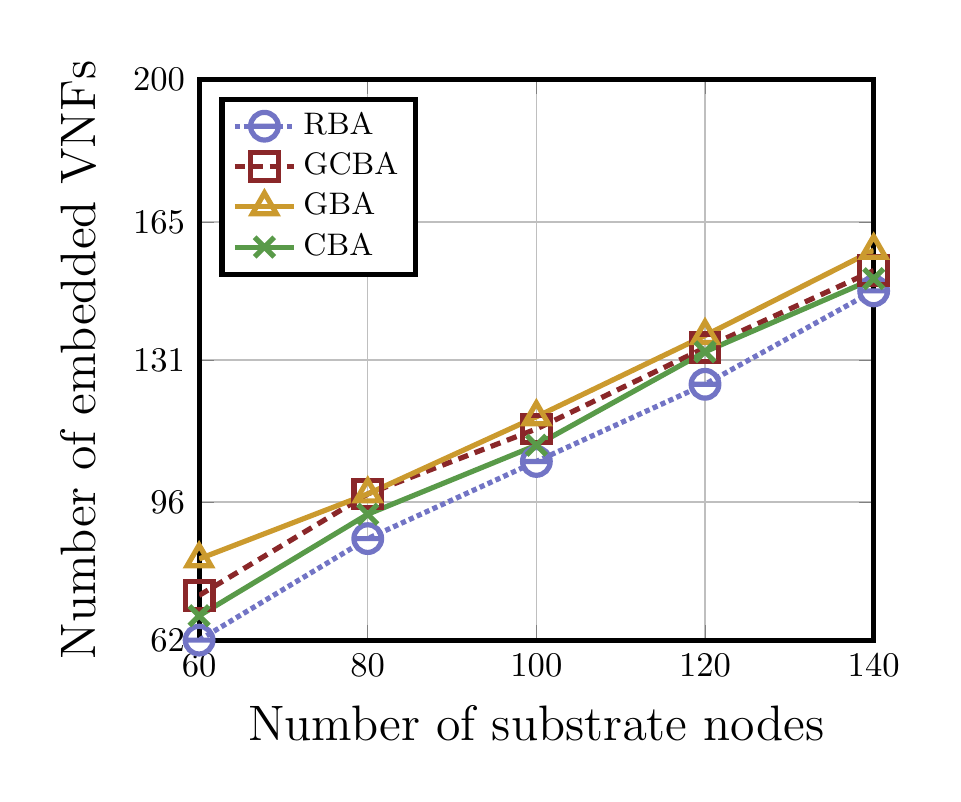} \label{fig:r-subnode-numsucc}}
\subfigure[]{\includegraphics[width=4.2cm]{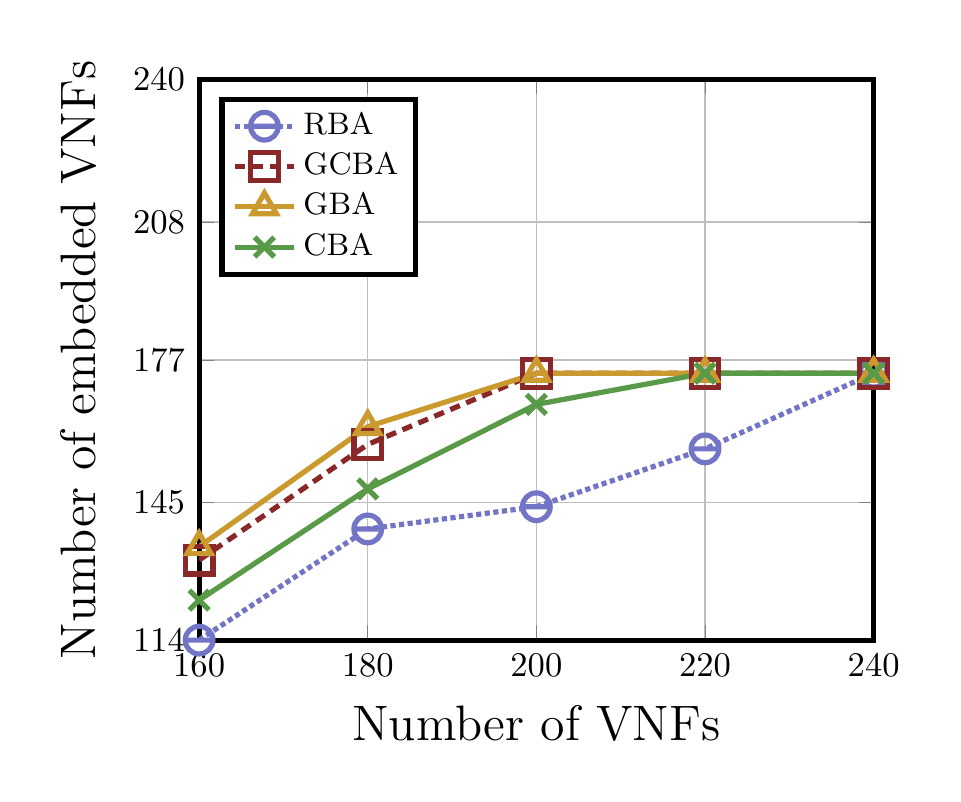} \label{fig:r-vnf-numsucc}}
\subfigure[]{\includegraphics[width=4.2cm]{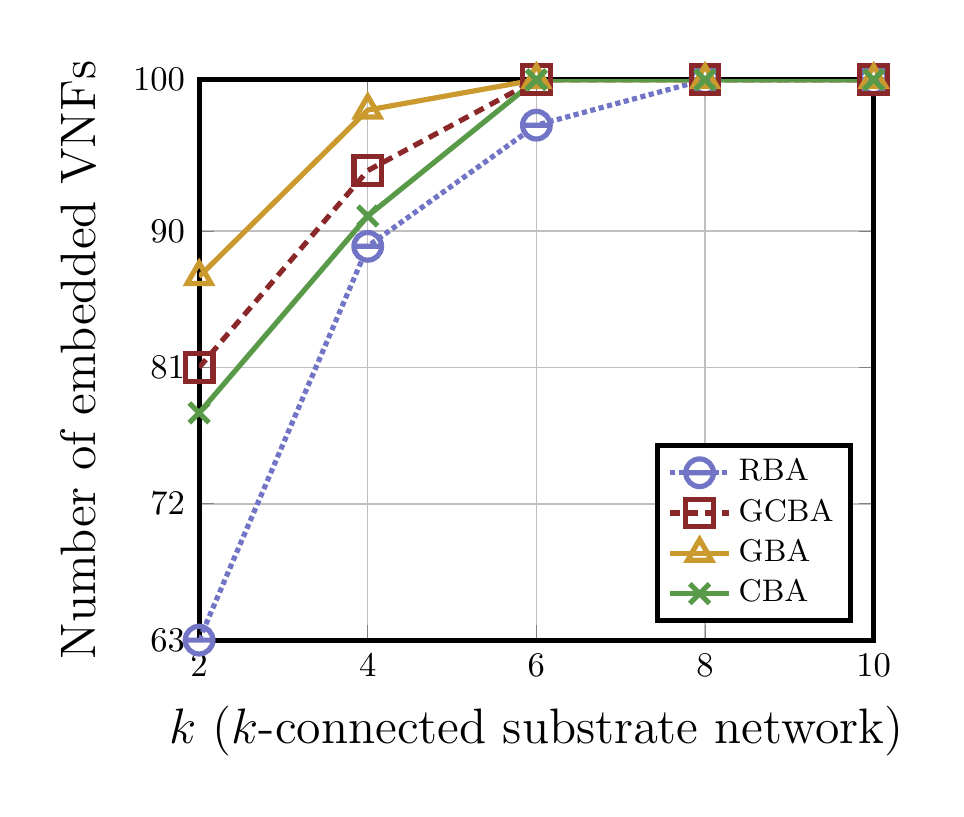} \label{fig:r-degsubnode-numsucc}}
\subfigure[]{\includegraphics[width=4.2cm]{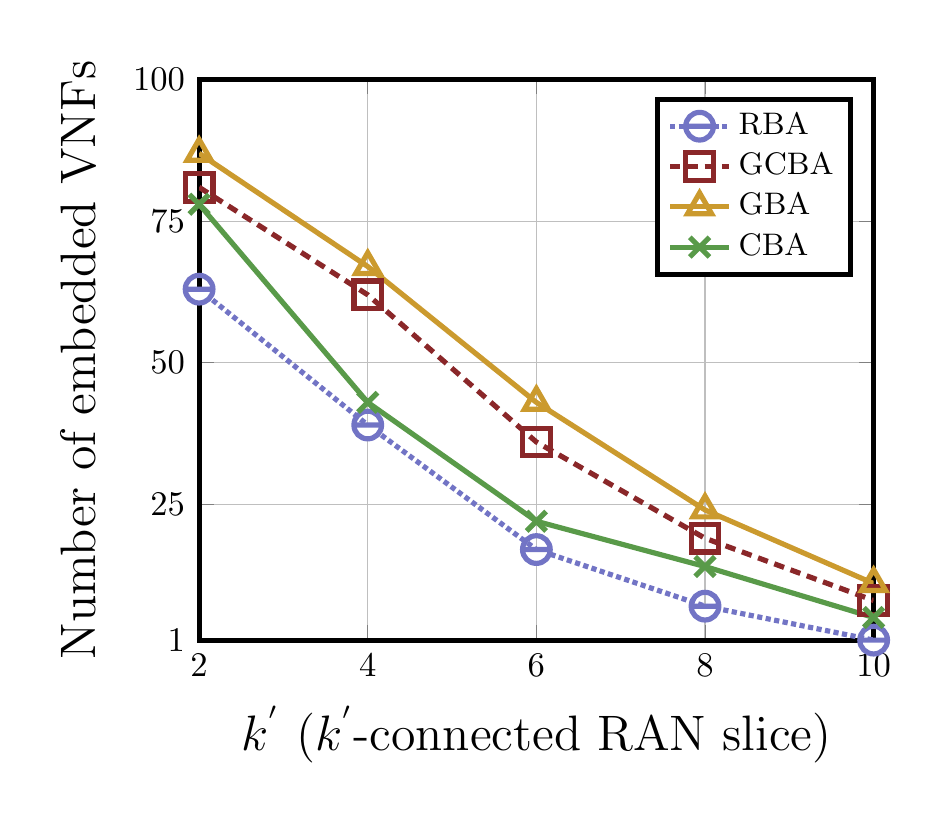} \label{fig:r-degvnf-numsucc}}
\caption{\textbf{Shortage Case:} total number of embedded VNFs when a) the number of substrate nodes ranging from 60 to 140, b) number of VNFs ranging from 160 to 240, c) $k$-connected substrate network with degree ($k$) ranging from 2 to 10 per substrate node, and d) $k'$-connected RAN slice with degree ($k'$) ranging from 2 to 10 per VNF.}
\label{fig:r-embedding}
\end{figure}

We first test the embedding performance of the proposed algorithms by comparing the results in both
normal and shortage case.
Fig. \ref{fig:a-embedding} and Fig. \ref{fig:r-embedding} show the performance of the proposed algorithms in terms of number of embedded VNFs in the normal and shortage case, respectively.

Fig. \ref{fig:a-subnode-numsuc} shows comparisons of the number of embedded VNFs when the number of substrate nodes ranges from $60$ to $140$. In the figure, GBA and GCBA provide the best performance with a higher total number of embedded VNFs.
During the embedding process, GBA and GCBA consider not only the required resources of a single VNF but also we take into account of the required resources of a cluster of VNFs. In addition, in the GCBA, we also consider the degree of VNFs (number of neighbors) when determining the mapping plan. The concept of cluster (interdependency) makes the GBA and GCBA to be the best in performing embedding VNFs.
The differences are reflected through the \textit{neighborhood resource cumulative property} $x(T)$ considered in both GBA and GCBA.

In Fig. \ref{fig:a-vnf-numsuc} we test the performance of the proposed algorithms when the number of VNFs ranging from $160$ to $240$. The results are similar to the Fig. \ref{fig:a-subnode-numsuc} as GBA and GCBA still demonstrate a better performance than the CBA and RBA in terms of the total number of embedded VNFs. We realize that the higher the VNFs in the substrate network, the higher the number of VNFs is successfully embedded into the substrate network. It indicates that the algorithms perform more efficiently when having more resources in the network (higher number of VNFs).

Fig. \ref{fig:a-subdeg-numsuc} shows comparisons of the test performance of the algorithms proposed for $k$, where $k$ denotes the degree of every substrate node in the substrate network, which ranges from $2$ to $10$. In the figure, GBA and GCBA demonstrate a better performance with a higher total number of embedded VNFs when comparing to CBA and RBA. We also observe that the higher the degree of substrate nodes, the higher the number of VNFs embedded in the substrate network. In addition, there is higher efficiency in performance when the degree of substrate nodes is increased.

In the Fig. \ref{fig:a-degvnf-numsucc} we analyze comparisons of the capability of the proposed algorithms with different $k^{'}$, in which $k^{'}$ denotes the degree of VNFs in RAN slices ranging from 2 to 10. GBA and GCBA continue to perform better when comparing CBA and RBA with the total embedded VNFs. We can evaluate that the lower the degree of VNFs in a RAN slice, the higher the number of VNFs successfully embedded in the substrate network. If the VNFs have a higher degree of VNFs, they demonstrate a lower embedding performance.
In summary, the results in Fig. \ref{fig:a-embedding} show the performance of all algorithms in the ``normal case" test.
We test the proposed algorithms using different metrics, including varying
the number of substrate nodes, changing the number of VNFs, adjusting the degree of substrate nodes ($k$) and the degree of VNF ($k'$).


In Fig. \ref{fig:r-embedding} we test the performance of the proposed algorithms in the ``shortage case" test to see how the shortage resource condition impacts to the number of VNF embedded in the network. Specifically, in Fig. \ref{fig:r-subnode-numsucc} we compare the embedding performance of the proposed algorithms to that the number of substrate nodes ranging from $60$ to $140$.
As shown in the figure, GBA and GCBA have the best efficiency performance with a higher total number of embedded VNFs. As mentioned in the previous analysis, both GBA and GCBA consider the \textit{neighborhood resource cumulative property} $x(T)$. In addition, in GBA we evaluate the $x(T)$ for both virtual networks and substrate network while
only the substrate network is evaluated with the $x(T)$ in GCBA. These metrics make GBA and GCBA perform better than CBA and RBA.
In contrast, using CBA and RBA, the embedding performance is lower with a lower number of embedded VNFs. Note that for CBA and RBA, all VNFs are scheduled to be embedded by considering only the embedding status of the individual VNFs resource or connectivity constraint.

Fig. \ref{fig:r-vnf-numsucc} shows the impact of the number of embedded VNFs with respect to the number of VNFs ranging from $160$ to $240$. The total number of VNFs embedded in the substrate network is higher in GBA and GCBA. Therefore, the efficiency of RBA and CBA is lower as the total number of embedded VNFs is lower in the substrate network. As the algorithms demonstrate a greater efficiency performance when number of VNFs is increased.
In the Fig. \ref{fig:r-degsubnode-numsucc} we analyze comparisons of the algorithms of number of embedded VNFs capability to $k$, where $k$ denotes the degree of each substrate node, which ranges from $2$ to $10$. GBA and GCBA exhibits a greater efficiency performance with a higher total number of VNFs embedded in the substrate network whereas, RBA and CBA  have a lower total number of VNFs embedded in the substrate network. We observe that the greater the degree of substrate nodes, the better the number of VNFs embedded in the substrate network. It indicates that there is a greater efficiency performance when the degree of substrate nodes is greater.
Fig. \ref{fig:r-degvnf-numsucc} shows comparisons of the total number of embedded VNFs to each ($k^{'}$). The number of VNFs connected per RAN slice is denoted by $k^{'}$, ranging from $2$ to $10$. GBA and GCBA show a higher total number of embedded VNFs in the substrate network. In contrast, RBA and CBA have a fewer total number of embedded VNFs as they have a lower embedding efficiency. In addition, the proposed algorithms demonstrate a lower embedding performance when the degree of VNF is higher.

\subsection{Resource Utilization Efficiency}

\begin{figure}[http]
\center
\subfigure[]{\includegraphics[width=4.2cm]{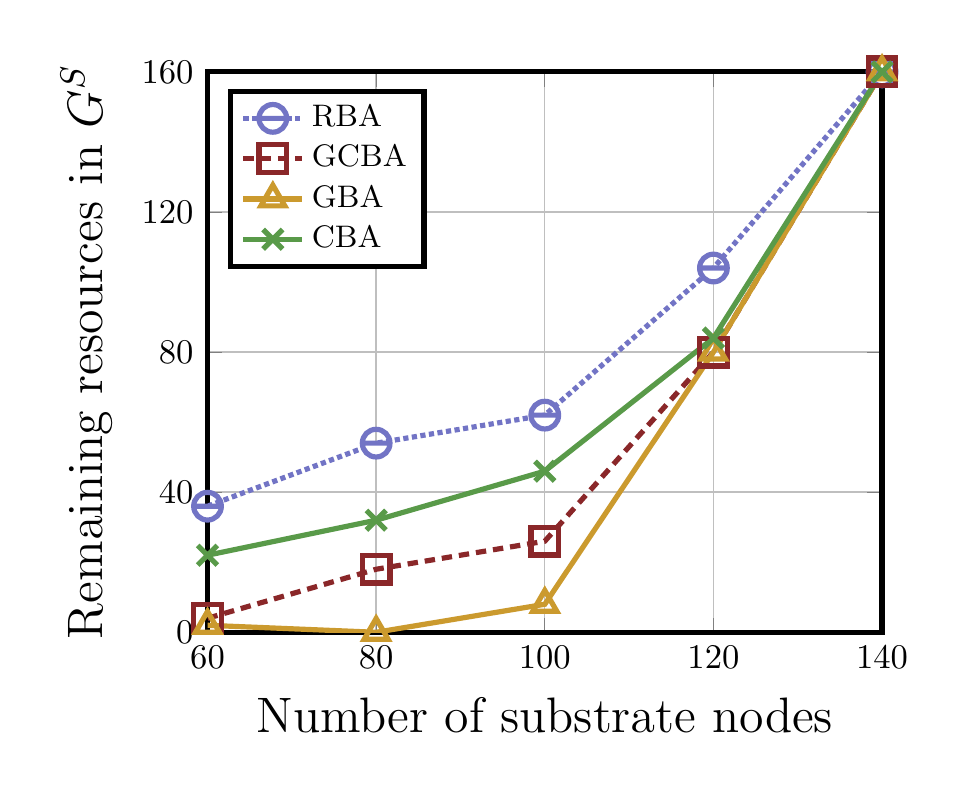} \label{fig:a-subnode-numrecava}}
\subfigure[]{\includegraphics[width=4.2cm]{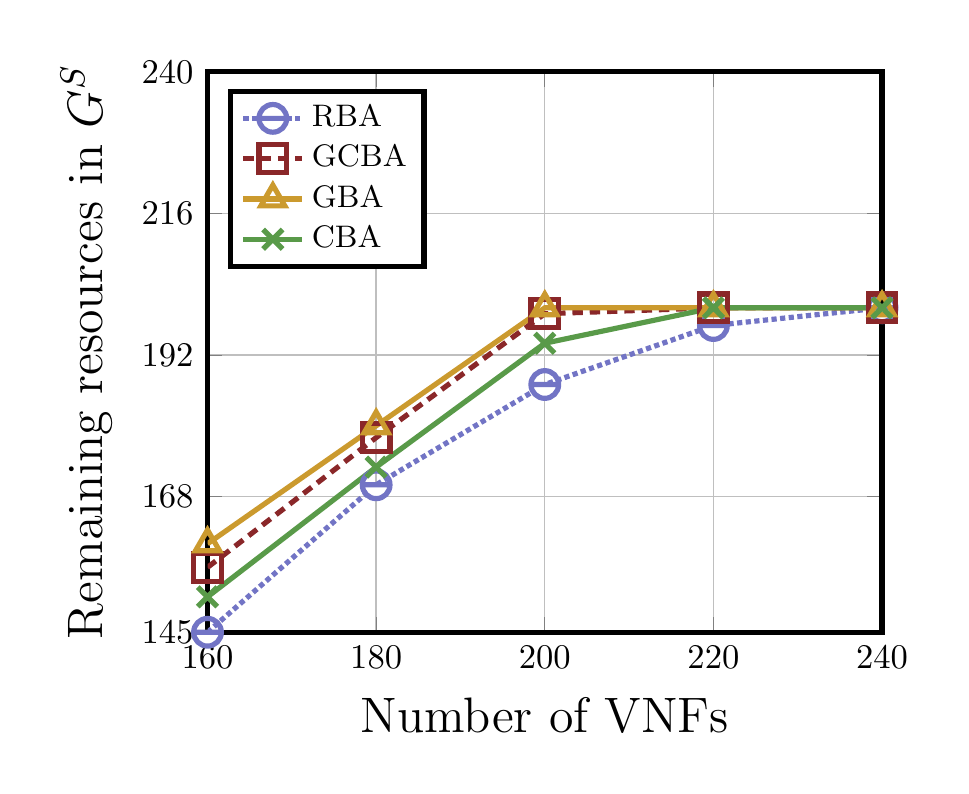} \label{fig:a-vnf-num-recava}}
\subfigure[]{\includegraphics[width=4.2cm]{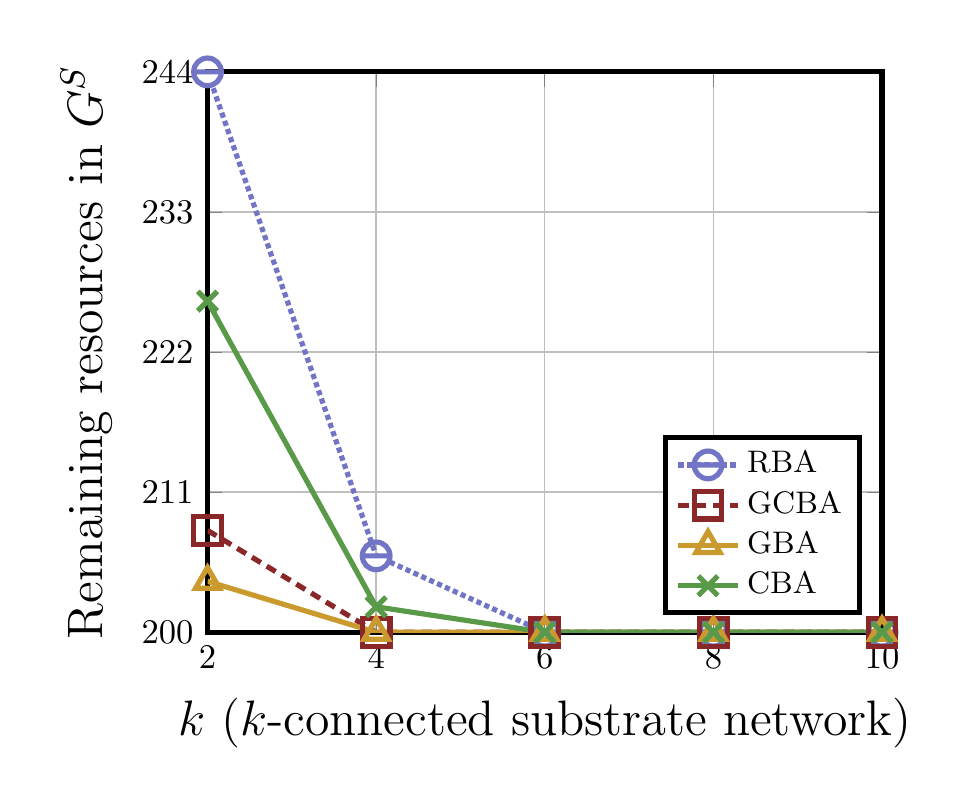} \label{fig:a-degsubnode-recava}}
\subfigure[]{\includegraphics[width=4.2cm]{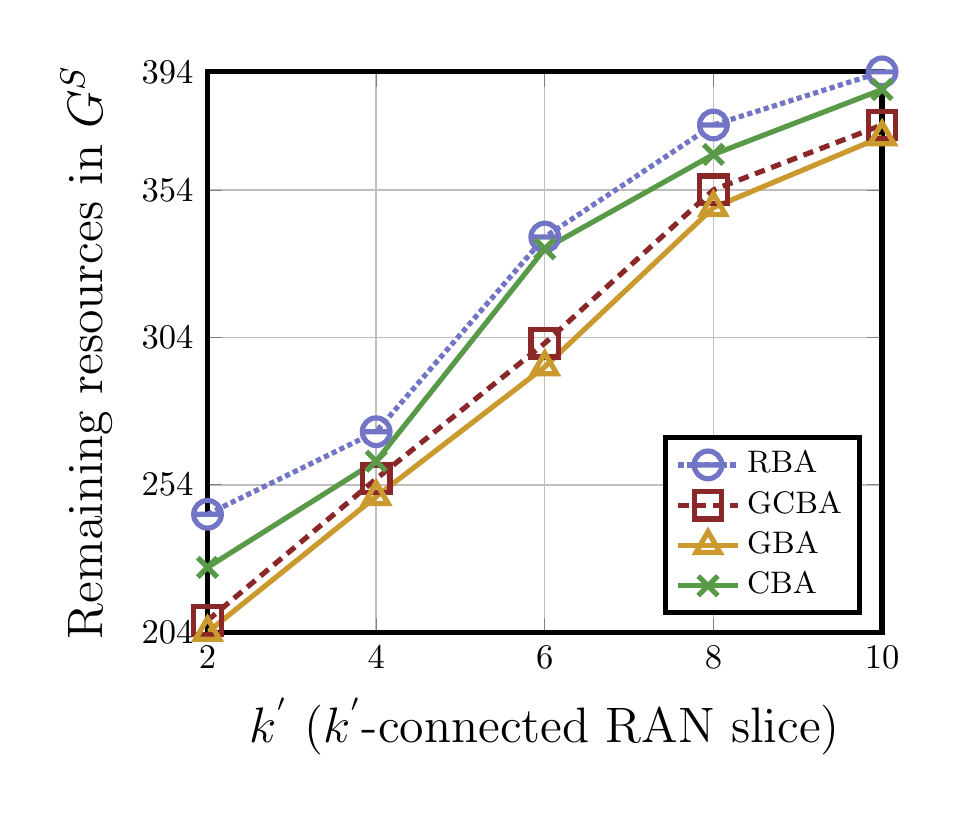} \label{fig:a-degvnf-recava}}
\caption{\textbf{Normal Case:} total amount of available resources in the substrate network when a) the number of substrate nodes ranging from 60 to 140, b) number of VNFs ranging from 160 to 240, c) $k$-connected substrate network with degree ($k$) ranging from 2 to 10 per substrate node, and d) $k^{'}$-connected RAN slice with degree ($k^{'}$) ranging from 2 to 10 per VNF.}
\label{fig:a-resource}
\end{figure}

\begin{figure}[http]
\center
\subfigure[]{\includegraphics[width=4.2cm]{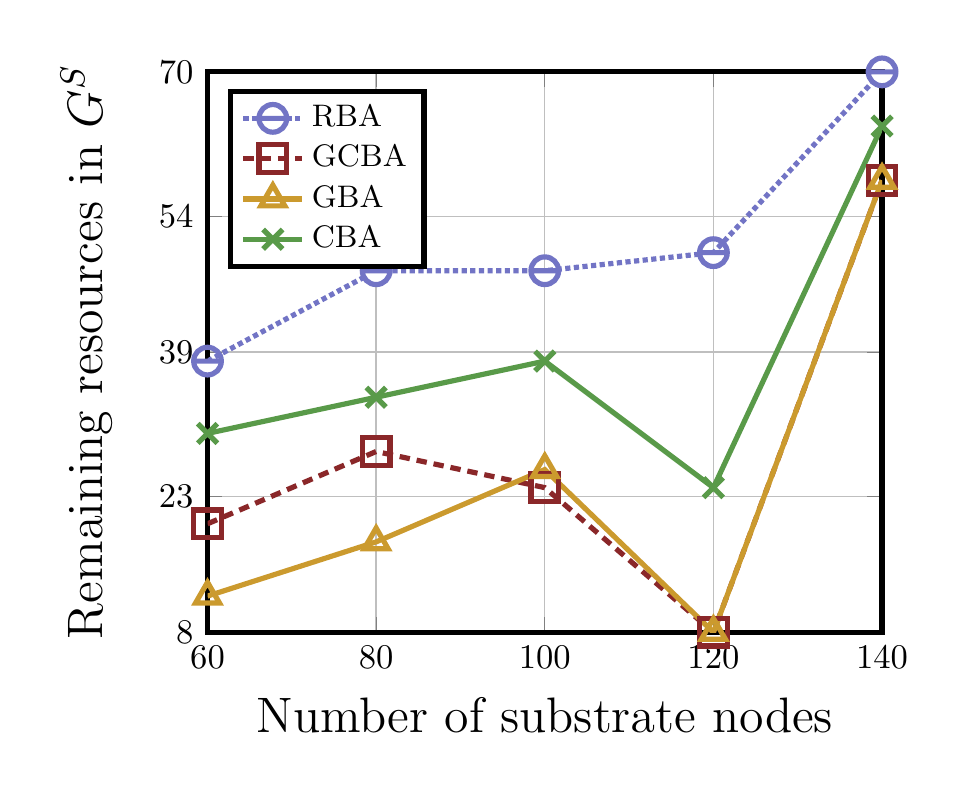} \label{fig:r-subnode-recava}}
\subfigure[]{\includegraphics[width=4.2cm]{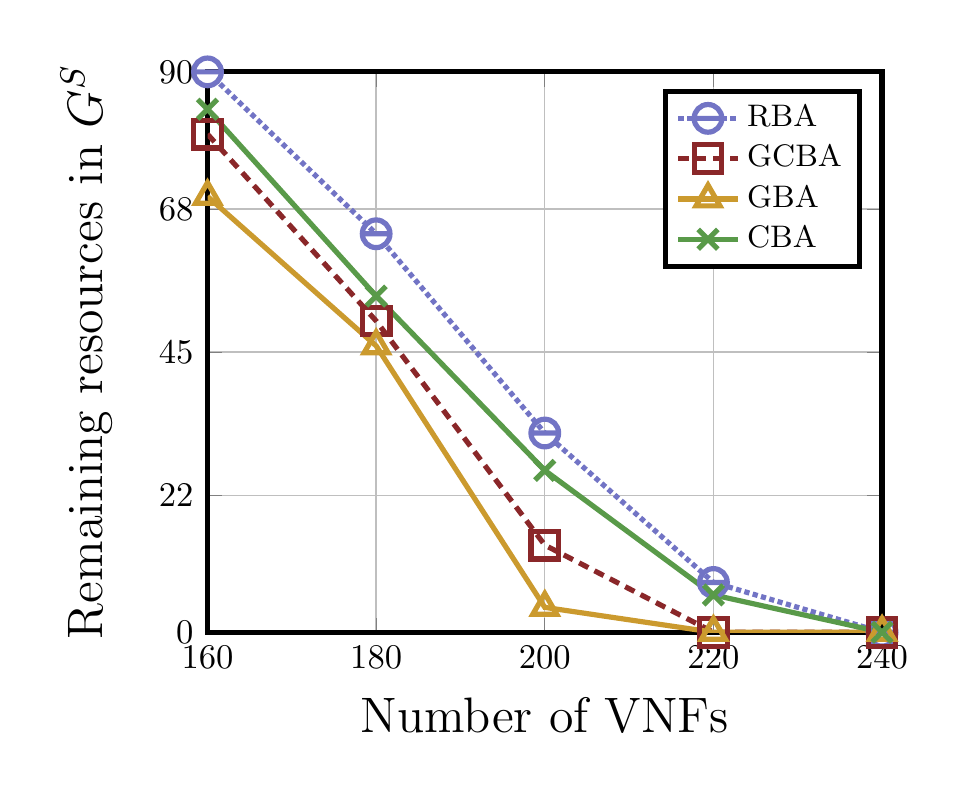} \label{fig:r-vnf-numrecava}}
\subfigure[]{\includegraphics[width=4.2cm]{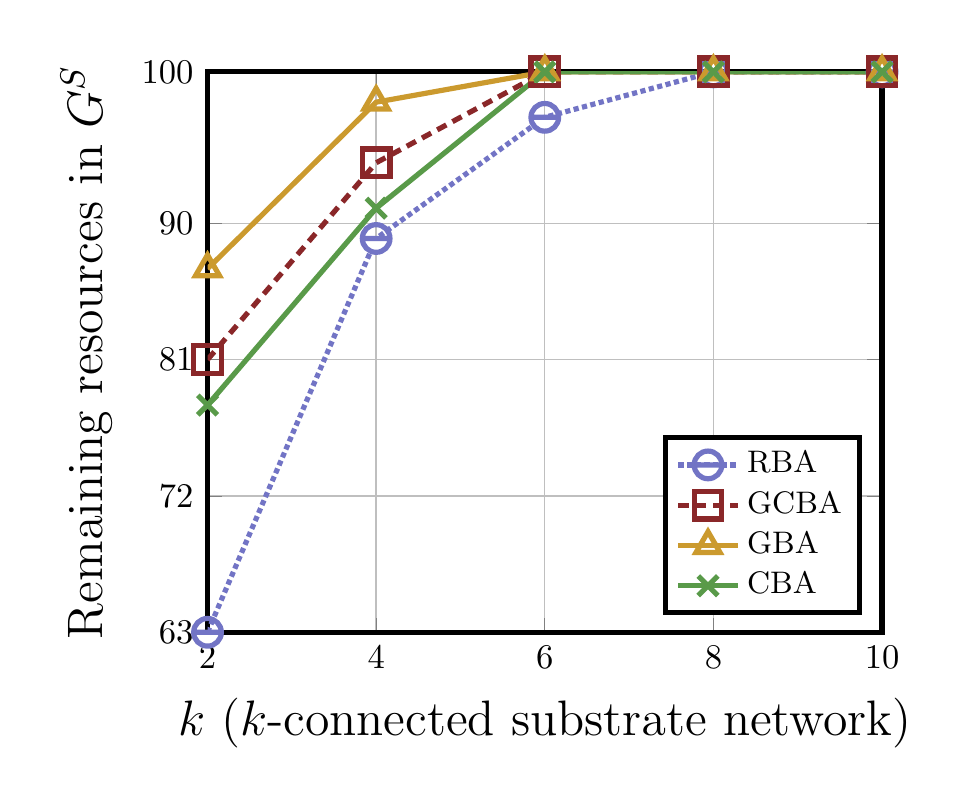} \label{fig:r-degsubnode-recava}}
\subfigure[]{\includegraphics[width=4.2cm]{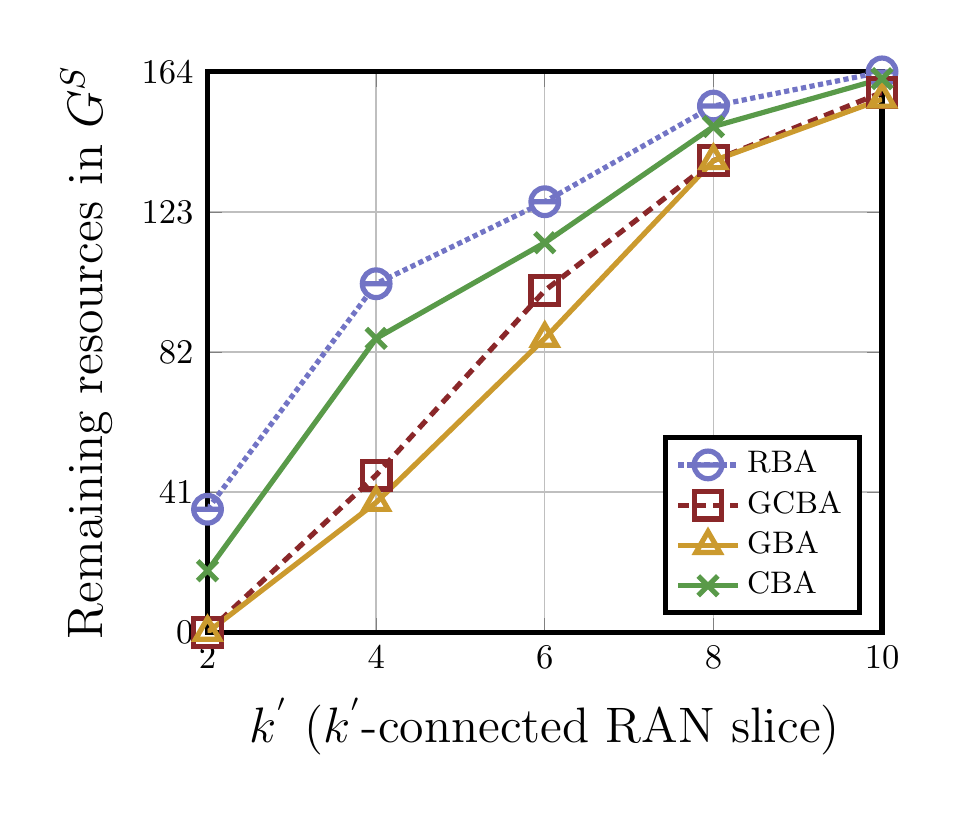} \label{fig:r-degvnf-recava}}
\caption{\textbf{Shortage Case:} total amount of available resources in the substrate network when a) the number of substrate nodes ranging from 60 to 140, b) number of VNFs ranging from 160 to 240, c) $k$-connected substrate network with degree ($k$) ranging from 2 to 10 per substrate node, and d) $k'$-connected RAN slice with degree ($k'$) ranging from 2 to 10 per VNF.}
\vspace{-15pt}
\label{fig:r-resource}
\end{figure}

In this section we conduct simulations to evaluate the algorithms in terms of the resource utilization efficiency in both scenarios (normal case in Fig. \ref{fig:a-resource} and shortage case in Fig. \ref{fig:r-resource}).
Fig. \ref{fig:a-subnode-numrecava} shows comparisons of the amount of remaining resources of the substrate network ($G^S$) after finishing the embedding process when the number of substrate nodes increasing from $60$ to $140$. The figure illustrates that RBA and CBA have a lower amount of remaining resources while committing a higher embedding performance as demonstrated in Fig. \ref{fig:a-embedding}.
As explained in the Fig. \ref{fig:a-embedding} and Fig. \ref{fig:r-embedding}, both RBA and CBA consider the cluster of VNFs and substrate nodes during the embedding process that helps leverage of not only embedding performance but also resource utilization performance. In addition, CBA and RBA provide a fair performance in terms of resource utilization with a lower time complexity.

Fig. \ref{fig:a-vnf-num-recava} shows the impact of remaining resources of the substrate node ($G^s$) with respect to the number of VNFs ranging from $160$ to $240$. As expected, GBA and GCBA continue to show better embedding performance as seen in Fig. \ref{fig:a-embedding} while still consuming minimal amount of resources. In contrast to, CBA and RBA, which have a slight decrease in the embedding efficiency even though, both algorithms utilize high resources to embed VNFs to the substrate network.

Fig. \ref{fig:a-degsubnode-recava} shows comparison of the remaining resources of the substrate node ($G^S$) with respect to $k$ where $k$ denotes the degree of each substrate node in the substrate network, ranging between $2$ to $10$. In the figure, GBA and GCBA still show superior performance invariably compare to CBA and RBA. Moreover, Fig. \ref{fig:a-degsubnode-recava} illustrates that as the degree of substrate node increases, the number of embedded VNFs increases in the algorithms GBA and GCBA, with still having a high amount of resources remaining, on the other hand RBA and CBA consume high amount of resources, but still result in lower performance efficiently.

Fig. \ref{fig:a-degvnf-recava} we analyze the comparisons of the proposed algorithms with different $k^{'}$ with respect to the remaining resources of each substrate node , which $k^{'}$ denotes the degree of VNFs in RAN slices ranging from 2 to 10. GBA and GCBA continue to perform better when comparing with CBA and RBA. As shown in Fig. \ref{fig:a-degvnf-recava} when the degree per substrate node is $2$ GBA and GCBA embed VNFs with as lower as $204$ remaining resources, but this not the case with RBA and CBA as both algorithms even after having to consume $254$ total resources still embed VNFs at a lower efficiently with comparison to GBA and GCBA.

Fig. \ref{fig:a-resource} shows comparisons of the amount of remaining resources of the substrate network ($G^S$) after finishing the embedding process when the number of substrate nodes increasing from $60$ to $140$. The figure illustrates that GBA and GCBA have a higher amount of remaining resources while committing a higher embedding performance as demonstrated in Fig. \ref{fig:a-embedding}.
As explained in the Fig. \ref{fig:a-embedding} and Fig. \ref{fig:r-embedding}, both GBA and GCBA consider the cluster of VNFs and substrate nodes during the embedding process that helps leverage of not only embedding performance but also resource utilization performance.
In addition, CBA and RBA provide a fair performance in terms of resource utilization with a lower time complexity.

Fig. \ref{fig:r-subnode-recava} illustrates comparisons of the amount of remaining resources of the substrate network ($G^S$) after completing the embedding process when the number of substrate nodes are increased from $60$ to $140$. The figure demonstrates that GBA and GCBA have a higher amount of remaining resources while committing a higher embedding performance efficiency as shown in Fig. \ref{fig:r-subnode-recava}.
As described in the Fig. \ref{fig:a-embedding} and Fig. \ref{fig:r-embedding}, both GBA and GCBA during the embedding process consider not only the required resources of a single VNF but also take into consideration of the required resources of a cluster of VNFs. GBA and GCBA, therefore, provide a better performance in terms of resource utilization with a fair time complexity.

In the Fig. \ref{fig:r-vnf-numrecava} we analyze comparisons of the algorithms capability in terms of the remaining resources in the substrate node ($G^S$) with respect to the number of VNFs ranging from $160$ to $240$. As expected both GBA and GCBA still continue to show better performance efficiency while embedding the VNFs with high remaining resources. In contrast to CBA and RBA performing slightly towards the lower end in terms of the embedded VNFs, on the other hand consuming high resources to embed the VNFs.

Fig. \ref{fig:r-degsubnode-recava} shows the performance of the proposed algorithms when varying the amount of remaining resources in the substrate nodes ($G^S$) with respect to $K$ which denotes the degree of the substrate node in the substrate network from $60$ to $140$. The performance of GBA and GCBA still shows a higher result, when compare to both CBA and RBA having a lower performance efficiency. Moreover, we can see that the higher the amount of remaining resources in the substrate network, higher is the embedding performance of GBA and GCBA while having the amount of remaining resources high.

Fig. \ref{fig:r-degvnf-recava} shows comparisons of the remaining resources in the substrate network ($G^S$) with respect to $k^{'}$, which denotes the degree of VNFs in an RAN slice, ranging from $2$ to $10$. GBA and GCBA show a higher total number of embedded VNFs in the substrate network. In contrast, RBA and CBA have a lower number of total embedded VNFs. As expected by this mechanism, GBA and GCBA embed more VNFs with still having high remaining resources, on the other hand CBA and RBA while taking more resources to embed have a lower embedding efficiency.
In summary, Fig. \ref{fig:r-resource} evaluations show that the comparisons of shortage case in Fig. \ref{fig:r-subnode-recava} -- Fig. \ref{fig:r-degvnf-recava} in terms of amount of remaining resources in the substrate network($G^S$) when varying the substrate nodes, VNFs, $k$, and $k^{'}$. GBA and GCBA consistently provide a better performance in terms of amount of resource utilization than RBA and CBA.

\section{Conclusion} \label{sec:conclusion}

In this paper we have established the theoretical foundation for using RS-configuration to construct a VNF embedding plan for the RAN slicing. We have formulated the efficient resource allocation as an essential problem with the objective to maximize the total number of embedded VNFs.
To solve the RS-configuration problem, we have introduced four efficient algorithms (RBA, CBA, GCBA and GBA) and show theoretical analyses to demonstrate the efficacy of the algorithms.
Extensive simulation results have been provided to evaluate the performance of the proposed algorithms using different metrics in terms of the embedding performance and resource utilization performance. We have created different scenarios to test the algorithms, including considering the network under the normal condition (normal case) and under the resource shortage condition (shortage case). Through the results of the simulations, GBA and GCBA consistently demonstrate a better performance for both embedding VNFs and utilizing the network's resources when comparing to RBA and CBA.
However, in terms of the time complexity RBA and CBA demonstrate as faster algorithms with lower time complexity while remaining fair embedding VNFs and resource utilization performances.

\section*{ACKNOWLEDGEMENT}
This research was supported in part by the US NSF grant CNS-2103405.

\normalem

\bibliographystyle{IEEEtran}
\bibliography{bib-tn}

\begin{thebibliography}{10}
\providecommand{\url}[1]{#1}
\csname url@samestyle\endcsname
\providecommand{\newblock}{\relax}
\providecommand{\bibinfo}[2]{#2}
\providecommand{\BIBentrySTDinterwordspacing}{\spaceskip=0pt\relax}
\providecommand{\BIBentryALTinterwordstretchfactor}{4}
\providecommand{\BIBentryALTinterwordspacing}{\spaceskip=\fontdimen2\font plus
\BIBentryALTinterwordstretchfactor\fontdimen3\font minus
  \fontdimen4\font\relax}
\providecommand{\BIBforeignlanguage}[2]{{%
\expandafter\ifx\csname l@#1\endcsname\relax
\typeout{** WARNING: IEEEtran.bst: No hyphenation pattern has been}%
\typeout{** loaded for the language `#1'. Using the pattern for}%
\typeout{** the default language instead.}%
\else
\language=\csname l@#1\endcsname
\fi
#2}}
\providecommand{\BIBdecl}{\relax}
\BIBdecl

\bibitem{10.1145/3241539.3241567}
C.~Marquez, M.~Gramaglia, M.~Fiore, A.~Banchs, and X.~Costa-Perez, ``How should
  i slice my network? a multi-service empirical evaluation of resource sharing
  efficiency,'' in \emph{Proceedings of the 24th Annual International
  Conference on Mobile Computing and Networking}, ser. MobiCom ’18, 2018, p.
  191–206.

\bibitem{10.1145/3117811.3117831}
\BIBentryALTinterwordspacing
X.~Foukas, M.~K. Marina, and K.~Kontovasilis, ``Orion: Ran slicing for a
  flexible and cost-effective multi-service mobile network architecture,'' in
  \emph{Proceedings of the 23rd Annual International Conference on Mobile
  Computing and Networking}, ser. MobiCom ’17.\hskip 1em plus 0.5em minus
  0.4em\relax New York, NY, USA: Association for Computing Machinery, 2017, p.
  127–140. [Online]. Available: \url{https://doi.org/10.1145/3117811.3117831}
\BIBentrySTDinterwordspacing

\bibitem{7926923}
X.~{Foukas}, G.~{Patounas}, A.~{Elmokashfi}, and M.~K. {Marina}, ``Network
  slicing in 5g: Survey and challenges,'' \emph{IEEE Communications Magazine},
  vol.~55, no.~5, pp. 94--100, 2017.

\bibitem{7926922}
T.~{Taleb}, B.~{Mada}, M.~{Corici}, A.~{Nakao}, and H.~{Flinck}, ``Permit:
  Network slicing for personalized 5g mobile telecommunications,'' \emph{IEEE
  Communications Magazine}, vol.~55, no.~5, pp. 88--93, 2017.

\bibitem{8370043}
P.~L. {Vo}, M.~N.~H. {Nguyen}, T.~A. {Le}, and N.~H. {Tran}, ``Slicing the
  edge: Resource allocation for ran network slicing,'' \emph{IEEE Wireless
  Communications Letters}, vol.~7, no.~6, pp. 970--973, 2018.

\bibitem{8004168}
H.~{Zhang}, N.~{Liu}, X.~{Chu}, K.~{Long}, A.~{Aghvami}, and V.~C.~M. {Leung},
  ``Network slicing based 5g and future mobile networks: Mobility, resource
  management, and challenges,'' \emph{IEEE Communications Magazine}, vol.~55,
  no.~8, pp. 138--145, 2017.

\bibitem{10.1145/3098822.3098826}
C.~Sun, J.~Bi, Z.~Zheng, H.~Yu, and H.~Hu, ``Nfp: Enabling network function
  parallelism in nfv,'' in \emph{Proceedings of the Conference of the ACM
  Special Interest Group on Data Communication}, ser. SIGCOMM ’17, 2017, p.
  43–56.

\bibitem{5290389}
D.~Mascarenas, E.~Flynn, C.~Farrar, G.~Park, and M.~Todd, ``A mobile host
  approach for wireless powering and interrogation of structural health
  monitoring sensor networks,'' \emph{IEEE Sensors Journal}, vol.~9, no.~12,
  pp. 1719--1726, Dec 2009.

\bibitem{5719526}
W.-C. Ke, B.-H. Liu, and M.-J. Tsai, ``Efficient algorithm for constructing
  minimum size wireless sensor networks to fully cover critical square grids,''
  \emph{IEEE Transactions on Wireless Communications}, vol.~10, no.~4, pp.
  1154--1164, April 2011.

\bibitem{6023087}
S.-R. Qi, ``The solution of problem about fermat point in application of the
  modern pipeline design,'' \emph{Electronic and Mechanical Engineering and
  Information Technology (EMEIT), 2011 International Conference on}, vol.~7,
  pp. 3785--3788, Aug 2011.

\bibitem{5705791}
R.~Mittal and M.~P.~S. Bhatia, ``Wireless sensor networks for monitoring the
  environmental activities,'' \emph{Computational Intelligence and Computing
  Research (ICCIC), 2010 IEEE International Conference on}, pp. 1--5, Dec 2010.

\bibitem{8717789}
H.~{Halabian}, ``Optimal distributed resource allocation in 5g virtualized
  networks,'' in \emph{2019 IFIP/IEEE Symposium on Integrated Network and
  Service Management (IM)}, 2019, pp. 28--35.

\bibitem{7946928}
A.~{Ksentini} and N.~{Nikaein}, ``Toward enforcing network slicing on ran:
  Flexibility and resources abstraction,'' \emph{IEEE Communications Magazine},
  vol.~55, no.~6, pp. 102--108, 2017.

\bibitem{6449268}
M.~R. {Rahman} and R.~{Boutaba}, ``Svne: Survivable virtual network embedding
  algorithms for network virtualization,'' \emph{IEEE Transactions on Network
  and Service Management}, vol.~10, no.~2, pp. 105--118, 2013.

\bibitem{7524565}
T.~{Kuo}, B.~{Liou}, K.~C. {Lin}, and M.~{Tsai}, ``Deploying chains of virtual
  network functions: On the relation between link and server usage,'' in
  \emph{IEEE INFOCOM 2016 - The 35th Annual IEEE International Conference on
  Computer Communications}, 2016, pp. 1--9.

\bibitem{7410276}
{Fangxin Wang}, {Ruilin Ling}, J.~{Zhu}, and D.~{Li}, ``Bandwidth guaranteed
  virtual network function placement and scaling in datacenter networks,'' in
  \emph{2015 IEEE 34th International Performance Computing and Communications
  Conference (IPCCC)}, 2015, pp. 1--8.

\bibitem{7926920}
P.~{Rost}, C.~{Mannweiler}, D.~S. {Michalopoulos}, C.~{Sartori},
  V.~{Sciancalepore}, N.~{Sastry}, O.~{Holland}, S.~{Tayade}, B.~{Han},
  D.~{Bega}, D.~{Aziz}, and H.~{Bakker}, ``Network slicing to enable
  scalability and flexibility in 5g mobile networks,'' \emph{IEEE
  Communications Magazine}, vol.~55, no.~5, pp. 72--79, 2017.

\bibitem{8004165}
S.~{Vassilaras}, L.~{Gkatzikis}, N.~{Liakopoulos}, I.~N. {Stiakogiannakis},
  M.~{Qi}, L.~{Shi}, L.~{Liu}, M.~{Debbah}, and G.~S. {Paschos}, ``The
  algorithmic aspects of network slicing,'' \emph{IEEE Communications
  Magazine}, vol.~55, no.~8, pp. 112--119, 2017.

\bibitem{8057230}
V.~{Sciancalepore}, K.~{Samdanis}, X.~{Costa-Perez}, D.~{Bega}, M.~{Gramaglia},
  and A.~{Banchs}, ``Mobile traffic forecasting for maximizing 5g network
  slicing resource utilization,'' in \emph{IEEE INFOCOM 2017 - IEEE Conference
  on Computer Communications}, 2017, pp. 1--9.

\bibitem{10.1145/2795381.2795390}
\BIBentryALTinterwordspacing
N.~Nikaein, E.~Schiller, R.~Favraud, K.~Katsalis, D.~Stavropoulos, I.~Alyafawi,
  Z.~Zhao, T.~Braun, and T.~Korakis, ``Network store: Exploring slicing in
  future 5g networks,'' in \emph{Proceedings of the 10th International Workshop
  on Mobility in the Evolving Internet Architecture}, ser. MobiArch
  ’15.\hskip 1em plus 0.5em minus 0.4em\relax New York, NY, USA: Association
  for Computing Machinery, 2015, p. 8–13. [Online]. Available:
  \url{https://doi.org/10.1145/2795381.2795390}
\BIBentrySTDinterwordspacing

\bibitem{7509393}
X.~{Zhou}, R.~{Li}, T.~{Chen}, and H.~{Zhang}, ``Network slicing as a service:
  enabling enterprises' own software-defined cellular networks,'' \emph{IEEE
  Communications Magazine}, vol.~54, no.~7, pp. 146--153, 2016.

\bibitem{7899415}
B.~{Chatras}, U.~S. {Tsang Kwong}, and N.~{Bihannic}, ``Nfv enabling network
  slicing for 5g,'' in \emph{2017 20th Conference on Innovations in Clouds,
  Internet and Networks (ICIN)}, 2017, pp. 219--225.

\bibitem{9071984}
S.~{D'Oro}, F.~{Restuccia}, and T.~{Melodia}, ``Toward operator-to-waveform 5g
  radio access network slicing,'' \emph{IEEE Communications Magazine}, vol.~58,
  no.~4, pp. 18--23, 2020.

\bibitem{7529130}
M.~{Richart}, J.~{Baliosian}, J.~{Serrat}, and J.~{Gorricho}, ``Resource
  slicing in virtual wireless networks: A survey,'' \emph{IEEE Transactions on
  Network and Service Management}, vol.~13, no.~3, pp. 462--476, 2016.

\bibitem{7417376}
R.~{Trivisonno}, R.~{Guerzoni}, I.~{Vaishnavi}, and A.~{Frimpong}, ``Network
  resource management and qos in sdn-enabled 5g systems,'' in \emph{2015 IEEE
  Global Communications Conference (GLOBECOM)}, 2015, pp. 1--7.

\bibitem{8377187}
Y.~{Jia}, H.~{Tian}, S.~{Fan}, P.~{Zhao}, and K.~{Zhao}, ``Bankruptcy game
  based resource allocation algorithm for 5g cloud-ran slicing,'' in \emph{2018
  IEEE Wireless Communications and Networking Conference (WCNC)}, 2018, pp.
  1--6.

\bibitem{8725795}
S.~{Schneider}, M.~{Peuster}, and H.~{Karl}, ``A generic emulation framework
  for reusing and evaluating vnf placement algorithms,'' in \emph{2018 IEEE
  Conference on Network Function Virtualization and Software Defined Networks
  (NFV-SDN)}, 2018, pp. 1--6.

\bibitem{7490359}
W.~{Fang}, M.~{Zeng}, X.~{Liu}, W.~{Lu}, and Z.~{Zhu}, ``Joint spectrum and it
  resource allocation for efficient vnf service chaining in inter-datacenter
  elastic optical networks,'' \emph{IEEE Communications Letters}, vol.~20,
  no.~8, pp. 1539--1542, 2016.

\bibitem{doro2019slice}
S.~D'Oro, F.~Restuccia, A.~Talamonti, and T.~Melodia, ``The slice is served:
  Enforcing radio access network slicing in virtualized 5g systems,''
  \emph{Proc. of IEEE Conference on Computer Communications (INFOCOM)}, 2019.

\bibitem{6963800}
H.~{Hawilo}, A.~{Shami}, M.~{Mirahmadi}, and R.~{Asal}, ``Nfv: state of the
  art, challenges, and implementation in next generation mobile networks
  (vepc),'' \emph{IEEE Network}, vol.~28, no.~6, pp. 18--26, 2014.

\bibitem{8494813}
P.~T.~A. {Quang}, A.~{Bradai}, K.~D. {Singh}, G.~{Picard}, and R.~{Riggio},
  ``Single and multi-domain adaptive allocation algorithms for vnf forwarding
  graph embedding,'' \emph{IEEE Transactions on Network and Service
  Management}, vol.~16, no.~1, pp. 98--112, 2019.

\bibitem{8004169}
K.~{Katsalis}, N.~{Nikaein}, E.~{Schiller}, A.~{Ksentini}, and T.~{Braun},
  ``Network slices toward 5g communications: Slicing the lte network,''
  \emph{IEEE Communications Magazine}, vol.~55, no.~8, pp. 146--154, 2017.

\bibitem{5Gwhitepaper}
G.~I. Team, ``5g white paper,'' \emph{NGMN Alliance, Frankfurt, Germany, White
  Paper Version 1.0}, p. 1–125, 2015.

\bibitem{8901876}
A.~I. {Swapna}, R.~V. {Rosa}, C.~E. {Rothenberg}, I.~{Sakellariou},
  L.~{Mamatas}, and P.~{Papadimitriou}, ``Towards a marketplace for
  multi-domain cloud network slicing: Use cases,'' in \emph{2019 ACM/IEEE
  Symposium on Architectures for Networking and Communications Systems (ANCS)},
  2019, pp. 1--4.

\bibitem{7073808}
N.~{Omnes}, M.~{Bouillon}, G.~{Fromentoux}, and O.~L. {Grand}, ``A programmable
  and virtualized network it infrastructure for the internet of things: How can
  nfv sdn help for facing the upcoming challenges,'' in \emph{2015 18th
  International Conference on Intelligence in Next Generation Networks}, 2015,
  pp. 64--69.

\bibitem{8454739}
R.~{Wen}, G.~{Feng}, J.~{Tang}, T.~Q.~S. {Quek}, G.~{Wang}, W.~{Tan}, and
  S.~{Qin}, ``On robustness of network slicing for next-generation mobile
  networks,'' \emph{IEEE Transactions on Communications}, vol.~67, no.~1, pp.
  430--444, 2019.

\bibitem{8254073}
A.~{Baumgartner}, T.~{Bauschert}, A.~M. C.~A. {Koster}, and V.~S. {Reddy},
  ``Optimisation models for robust and survivable network slice design: A
  comparative analysis,'' in \emph{GLOBECOM 2017 - 2017 IEEE Global
  Communications Conference}, Dec 2017, pp. 1--7.

\bibitem{8254685}
R.~{Wen}, J.~{Tang}, T.~Q.~S. {Quek}, G.~{Feng}, G.~{Wang}, and W.~{Tan},
  ``Robust network slicing in software-defined 5g networks,'' in \emph{GLOBECOM
  2017 - 2017 IEEE Global Communications Conference}, 2017, pp. 1--6.

\bibitem{PISINGER20052271}
\BIBentryALTinterwordspacing
D.~Pisinger, ``Where are the hard knapsack problems?'' \emph{Computers \&
  Operations Research}, vol.~32, no.~9, pp. 2271--2284, 2005. [Online].
  Available:
  \url{https://www.sciencedirect.com/science/article/pii/S030505480400036X}
\BIBentrySTDinterwordspacing

\end{thebibliography}
\end{document}